\pgfplotsset{compat=newest}
\newcommand{\av}{{\bf a}}
\newcommand{\bv}{{\bf b}}
\newcommand{\dv}{{\bf d}}
\newcommand{\ev}{{\bf e}}
\newcommand{\hv}{{\bf h}}
\newcommand{\sv}{{\bf s}}
\newcommand{\uv}{{\bf u}}
\newcommand{\wv}{{\bf w}}
\newcommand{\vv}{{\bf v}}
\newcommand{\xv}{{\bf x}}
\newcommand{\yv}{{\bf y}}
\newcommand{\zerov}{{\bf 0}}
\newcommand{\Am}{{\bf A}}
\newcommand{\Bm}{{\bf B}}
\newcommand{\Cm}{{\bf C}}
\newcommand{\Dm}{{\bf D}}
\newcommand{\Hm}{{\bf H}}
\newcommand{\Id}{{\bf I}}
\newcommand{\Km}{{\bf K}}
\newcommand{\Lm}{{\bf L}}
\newcommand{\Mm}{{\bf M}}
\newcommand{\Pm}{{\bf P}}
\newcommand{\Qm}{{\bf Q}}
\newcommand{\Rm}{{\bf R}}
\newcommand{\Um}{{\bf U}}
\newcommand{\Wm}{{\bf W}}
\newcommand{\Vm}{{\bf V}}
\newcommand{\Xm}{{\bf X}}
\newcommand{\Ym}{{\bf Y}}
\newcommand{\Zm}{{\bf Z}}
\newcommand{\Ac}{{\cal A}}
\newcommand{\Cc}{{\cal C}}
\newcommand{\Ic}{{\cal I}}
\newcommand{\Mc}{{\cal M}}
\newcommand{\Nc}{{\cal N}}
\newcommand{\Oc}{{\cal O}}
\newcommand{\Sc}{{\cal S}}
\newcommand{\Hcb}{\pmb{\cal H}}
\newtheoremstyle{mystyle}% name
{}%      Space above
{}%      Space below
{\itshape}% Body font
{}%         Indent amount (empty = no indent, \parindent = para indent)
{\bf}%      Thm head font
{}%         Punctuation after thm head
{.5em}%     Space after thm head: " " = normal interword space;
\theoremstyle{mystyle}
\newtheorem{mylem}{Lemma}
\newtheorem{myexm}{Example}
\newtheorem{mycor}{Corollary}
\newtheorem{mythm}{Theorem}
\begin{document}
%\includepdf[pages=1-10]{Double_col.pdf}
\title{On Low-Complexity Full-diversity Detection In Multi-User MIMO Multiple-Access Channels}
\author{\IEEEauthorblockN{Amr~Ismail,~\IEEEmembership{Member, IEEE} and~Mohamed-Slim~Alouini,~\IEEEmembership{Fellow, IEEE}}}

\maketitle
\begin{abstract}
Multiple-input multiple-output (MIMO) techniques are becoming commonplace in recent wireless communication standards. This added dimension (i.e., space) can be efficiently used to mitigate the interference in the multi-user MIMO context. In this paper, we focus on the uplink of a MIMO multiple access channel (MAC) where perfect channel state information (CSI) is only available at the destination. We provide a new set of sufficient conditions for a wide range of space-time block codes (STBC)s to achieve full-diversity under \emph{partial interference cancellation group decoding} (PICGD) with or without successive interference cancellation (SIC) for completely blind users. Explicit interference cancellation (IC) schemes for two and three users are then provided and shown to satisfy the derived full-diversity criteria. Besides the complexity reduction due to the fact that the proposed IC schemes enable separate decoding of distinct users without sacrificing the diversity gain, further reduction of the decoding complexity may be obtained. In fact, thanks to the structure of the proposed schemes, the real and imaginary parts of each user's symbols may be decoupled without any loss of performance. Finally, our theoretical claims are corroborated by simulation results and the new IC scheme for two-user MIMO MAC is shown to outperform the recently proposed two-user IC scheme especially for high spectral efficiency while requiring significantly less decoding complexity.   
\end{abstract}

\let\thefootnote\relax\footnotetext{The authors are with the Computer, Electrical, and Mathematical Sciences and Engineering (CEMSE) Division, King Abdullah University of Science and Technology (KAUST), Thuwal, Makkah Province, Kingdom of Saudi Arabia (e-mail:$\lbrace$amrismail.tammam, slim.alouini$\rbrace$@kaust.edu.sa)}

\begin{IEEEkeywords}
Interference cancellation, full-diversity, decoding complexity, partial interference cancellation group decoding.
\end{IEEEkeywords}
\section{Introduction}
Interference mitigation is a major issue in the design of wireless communication systems. Classical techniques to cancel the interference rely on sharing the available resources among the different users. In time division multiple access (TDMA) systems, the time slots are divided between different users such that in each time slot one user is transmitting solely. Similarly, in frequency division multiple access (FDMA) systems, the frequency band is divided between the different users such that the users are transmitting their data through disjoint sub-carriers. A higher number of users may be accommodated if a hybrid scheme of TDMA/FDMA is adopted which is the case for the global system for mobile communications (GSM) where eight users are time division multiplexed over each sub-carrier. In a system that employs direct sequence code division multiple access (DS-CDMA), different users share the same frequency band but are assigned orthogonal spread sequences, such that the receiver is capable of extracting a specific user's data through correlating the received signal with the corresponding spreading sequence. 

The incorporation of multiple-input multiple-output (MIMO) schemes in recent standards such as Wireless Personal Area Networks \cite{IEEE_802.15c}, Wireless Local Area Networks \cite{IEEE_802.11n}, mobile WIMAX \cite{IEEE_802.16e}, 3GPP LTE (release 8 and 9), and LTE-advanced (release 10), paved the way to exploit the added dimension (i.e., space) to efficiently cancel the interference without requiring additional resources. In fact, \emph {space-based} interference cancellation (IC) techniques can be employed along with conventional techniques (e.g., TDMA, FDMA, CDMA, etc.) to increase the number of accommodated users.  

When designing the so-called space-based interference cancellation techniques, several objectives have to be taken into consideration, namely, providing high-rate communication, achieving the full-diversity, the simplicity of the decoding algorithm measured in terms of average and worst-case decoding complexity \cite{FAST} and finally, the ability to accommodate a high number of users. In this context, full-diversity denotes the maximal diversity gain offered by the network configuration as if each user was transmitting solely, in other words, the \emph{interference-free} maximal diversity gain. Towards this end, several interference cancellation schemes have been proposed in the literature. In what follows, the existing interference cancellation schemes will be outlined.

%****************************************************************************************%

\subsection*{Prior Work}
In \cite{2-USER_MIMO_MAC}, the authors proposed a scheme to suppress the interference for a two-user MIMO multiple access channel (MAC) where each user is equipped with two transmit antennas and the common receiver has $r+1$ antennas. The  proposed scheme assumed blind transmitters and relied on the orthogonality of the Alamouti codewords \cite{ALAMOUTI}. This scheme enables a rate-1 symbol per channel use (spcu) communication for each user with  fixed detection complexity of $\Oc(1)$ irrespective of the underlying quadrature amplitude modulation (QAM) constellation. The main drawback of this scheme is the substantial loss of diversity, in fact the achievable diversity gain with this scheme is $2r$ instead of $2\left(r+1\right)$. The problem of diversity loss was partially resolved in \cite{2-USER_MIMO_MAC_FD}, where the authors proposed a decoding algorithm that achieves full-diversity in the two-user MIMO MAC setting provided that the destination is equipped with more than two antennas. Latter, the original scheme in \cite{2-USER_MIMO_MAC} was extended in \cite{J-USER_MIMO_MAC_2TX} to the case of $J$ users each equipped with two transmit antennas. However the provided diversity gain assuming a receiver equipped with $J+r-1$ antennas is equal to $2r$. Subsequently, this scheme was generalized in \cite{J-USER_MIMO_MAC_NTX} for the case of $J$ users with $N$ transmit antennas, where for a receiver equipped with $J+r-1$ antennas, the achieved diversity gain is equal to $Nr$.

In order to overcome the diversity gain loss inherent in the previous schemes, F. Li and H. Jafarkhani proposed a full-diversity cancellation scheme for two users each equipped with $2N$ transmit antennas and a receiver equipped with $M$ receive antennas \cite{2-USER_MIMO_MAC_FD_CSIT}. The proposed scheme relies on creating orthogonal subspaces at the receiver for the different users, thus retaining the maximum diversity gain under the assumption of global channel state information (CSI) availability at the transmitters with a worst-case decoding complexity of $\Oc\left(q^N\right)$, where $q$ denotes the size of the underlying square QAM constellation. The full knowledge of the global CSIT was then relaxed in \cite{2-USER_MIMO_MAC_FD_LFB}, where only a limited feedback is available to the transmitters. Moreover, in \cite{J-USER_MIMO_MAC_QOSTBC}, the authors extended the IC scheme in \cite{2-USER_MIMO_MAC_FD_CSIT} to accommodate more users but under the constraint of the availability of global CSIT. Recently, in \cite{2-USER_MIMO_MAC_PIC}, the authors proposed a systematic full-diversity IC scheme with \emph{partial interference cancellation group decoding} (PICGD) for two blind users with asymptotic individual rate of $1$ spcu when the signalling period gets too large w.r.t the number of transmit antennas.  
 
%****************************************************************************************%

\subsection*{Our Contribution}
In this paper, we address the design of low-complexity, full-diversity multi-user space-time coding schemes for the MIMO MAC uplink. The outcome of the present paper may be summarized as follows.
\begin{itemize}
\item A new set of sufficient conditions for a wide range of space-time block codes (STBC)s to achieve full-diversity under PICGD \cite{PICGD,PICGD_errata} with or without successive interference cancellation (SIC) for completely blind users is derived. Compared to the derived sufficient conditions in \cite{2-USER_MIMO_MAC_PIC}, our design criteria is less restrictive, thus enabling higher rate IC schemes. In fact, as will be shown shortly, the design criteria in \cite{2-USER_MIMO_MAC_PIC} need to be satisfied only {\bf \emph{almost surely}} rather than with strict equality.
\item Explicit IC schemes for two and three users satisfying the derived design criteria under PICGD and PICGD-SIC, respectively are provided.
\item Besides the ability of the proposed IC schemes to decode distinct users disjointly without sacrificing the full-diversity, it is proven (see Appendices B, and C) that further reduction of the decoding complexity may be obtained through separate decoding of the real and imaginary parts of each user's symbols without incurring any loss of performance.
\end{itemize}
Our theoretical claims are first corroborated by simulation results, i.e., the diversity gain of our two and three users IC schemes are numerically verified to be identical to the full-diversity gain under the configuration of interest. The proposed two-user IC scheme is then compared to its counterpart in \cite{2-USER_MIMO_MAC_PIC} in terms of the codeword error rate (CER), the average decoding complexity, and the worst-case decoding complexity. It is found that our scheme outperforms its counterpart in \cite{2-USER_MIMO_MAC_PIC} especially for high spectral efficiencies, while requiring significantly less decoding complexity. 

%****************************************************************************************%

\subsection*{Notations}
Throughout the paper, small letters, bold small letters, bold capital letters, and calligraphic letters will designate scalars, vectors, matrices, and sets respectively. If $\Am$ is a matrix, then $\Am^\mathsf{H}$, $\Am^\mathsf{T}$, $\Am^{\dagger}$, and $\text{r}\left(\Am\right)$ denote the Hermitian, the transpose, the pseudo-inverse, and the rank of $\Am$, respectively. We define $\Mc\left(\Am\right)$ to be the vector space spanned by the columns of $\Am$ and the $\text{vec}(\Am)$ as the operator which  when applied to a $m \times n$ matrix $\Am$, transforms it  into a $mn\times 1$ vector by vertically concatenating  the columns of the corresponding matrix. The $\text{diag}\left(\av\right)$ operator returns a square matrix with the vector $\av$ on its main diagonal. For two column vectors $\av$ and $\bv$, $\left<\av,\bv\right>$ denotes their dot product $\left(\av^\mathsf{H}\bv\right)$. If a matrix $\Am\succeq 0$, then $\Am$ is positive semi-definite. The $\otimes$ operator is the Kronecker product. The notion $\Pm_{\pmb{\sigma}}$ denotes a $n\times n$ permutation matrix defined as $\begin{bmatrix}\ev_{\pmb{\sigma}\left(1\right)}&\ev_{\pmb{\sigma}\left(2\right)}&\ldots&\ev_{\pmb{\sigma}\left(n\right)}\end{bmatrix}$,
where $\ev_i$ denotes the $i$-th column of the $n\times n$ identity matrix, and $\pmb{\sigma}\in\Sc_n$; the set of all permutations over $\left\lbrace1,\ldots,n\right\rbrace$. The $\Re\left(.\right)$ and $\Im\left(.\right)$ operators denote the real and imaginary parts of their arguments, respectively. $\Id_n$ denotes the $n\times n$ identity matrix, while $\zerov$ denote the null matrix of appropriate size. If a random vector $\xv\sim\Cc\Nc\left(\pmb{\mu},\Km\right)$, then $\xv$ is drawn from a circularly symmetric complex Gaussian distribution with mean $\pmb{\mu}$ and covariance matrix $\Km$. Finally, $\mathbb{P}\left[A\right]$ denotes the probability of the event $A$ to occur, and $\mathbb{E}_x\left[f\left(x\right)\right]$ denotes the statistical average of an arbitrary function $f\left(x\right)$ w.r.t the random variable $x$.

%%%%%%%%%%%%%%%%%%%%%%%%%%%%%%%%%%%%%%%%%%%%%%%%%%%%%%%%%%%%%%%%%%%%%%%%%%%%%%%%%%%%%%%%%%

\section{System model}

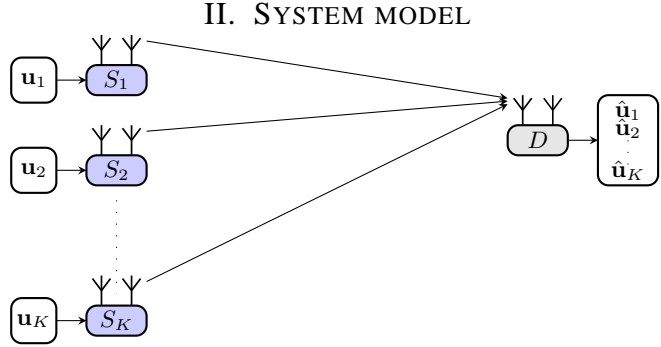
\begin{figure}[h!]
\centering 
\begin{tikzpicture}[x=0.4cm,y=0.4cm]
\filldraw [fill=blue!20,rounded corners,thick] (0,4) rectangle (2,5);%Source_1
%1st antenna
\draw[-,line width=0.65pt] (0.5,5) -- (0.5,6);
\draw[-,line width=0.65pt] (0.5,5.5) -- (0.8,5.9);
\draw[-,line width=0.65pt] (0.5,5.5) -- (0.2,5.9);
%2nd antenna
\draw[-,line width=0.65pt] (1.5,5) -- (1.5,6);
\draw[-,line width=0.65pt] (1.5,5.5) -- (1.8,5.9);
\draw[-,line width=0.65pt] (1.5,5.5) -- (1.2,5.9);
\draw (1,4.5) node {\footnotesize $S_1$};
\filldraw [fill=blue!20,rounded corners,thick] (0,1) rectangle (2,2);%Source_2
%1st antenna
\draw[-,line width=0.65pt] (0.5,2) -- (0.5,3);
\draw[-,line width=0.65pt] (0.5,2.5) -- (0.8,2.9);
\draw[-,line width=0.65pt] (0.5,2.5) -- (0.2,2.9);
%2nd antenna
\draw[-,line width=0.65pt] (1.5,2) -- (1.5,3);
\draw[-,line width=0.65pt] (1.5,2.5) -- (1.8,2.9);
\draw[-,line width=0.65pt] (1.5,2.5) -- (1.2,2.9);
\draw (1,1.5) node {\footnotesize $S_2$};
\draw[-,loosely dotted,thin] (1,0.5) -- (1,-2.6);
\filldraw [fill=blue!20,rounded corners,thick] (0,-4) rectangle (2,-3);%Source_K
%1st antenna
\draw[-,line width=0.65pt] (0.5,-3) -- (0.5,-2);
\draw[-,line width=0.65pt] (0.5,-2.5) -- (0.8,-2.1);
\draw[-,line width=0.65pt] (0.5,-2.5) -- (0.2,-2.1);
%2nd antenna
\draw[-,line width=0.65pt] (1.5,-3) -- (1.5,-2);
\draw[-,line width=0.65pt] (1.5,-2.5) -- (1.8,-2.1);
\draw[-,line width=0.65pt] (1.5,-2.5) -- (1.2,-2.1);
\draw (1,-3.5) node {\footnotesize $S_K$};
\draw[-stealth,solid,thin] (2,5.8) -- (14,3.9);%User_1-to-Destination link
\draw[-stealth,solid,thin] (2,2.8) -- (14,3.8);%User_2-to-Destination link
\draw[-stealth,solid,thin] (2,-2.2) -- (14,3.7);%User_K-to-Destination link

\filldraw [fill=white,rounded corners,thick] (-2.5,3.75) rectangle (-1,5.25);%Messages_Source_1
\draw[-stealth,solid,thin] (-1,4.5) -- (0,4.5);
\draw (-1.75,4.5) node {\footnotesize $\uv_1$};
\filldraw [fill=white,rounded corners,thick] (-2.5,0.75) rectangle (-1,2.25);%Messages_Source_2
\draw[-stealth,solid,thin] (-1,1.5) -- (0,1.5);
\draw (-1.75,1.5) node {\footnotesize $\uv_2$};
\filldraw [fill=white,rounded corners,thick] (-2.5,-4.25) rectangle (-1,-2.75);%Messages_Source_K
\draw[-stealth,solid,thin] (-1,-3.5) -- (0,-3.5);
\draw (-1.75,-3.5) node {\footnotesize $\uv_K$};

\filldraw [fill=gray!20,rounded corners,thick] (14,2) rectangle (16,3);%Destination
%1st antenna
\draw[-,line width=0.65pt] (14.5,3) -- (14.5,4);
\draw[-,line width=0.65pt] (14.5,3.5) -- (14.8,3.9);
\draw[-,line width=0.65pt] (14.5,3.5) -- (14.2,3.9);
%2nd antenna
\draw[-,line width=0.6pt] (15.5,3) -- (15.5,4);
\draw[-,line width=0.6pt] (15.5,3.5) -- (15.8,3.9);
\draw[-,line width=0.6pt] (15.5,3.5) -- (15.2,3.9);
\draw (15,2.5) node {\footnotesize $D$};

\filldraw [fill=white,rounded corners,thick] (17,1) rectangle (19,4);%Messages_Source_1
\draw[-stealth,solid,thin] (16,2.5) -- (17,2.5);
\draw (18,3.5) node {\footnotesize $\hat{\uv}_1$};
\draw (18,2.9) node {\footnotesize $\hat{\uv}_2$};
\draw[-,loosely dotted,thin] (18,2.5) -- (18,1.6);
\draw (18,1.5) node {\footnotesize $\hat{\uv}_K$};

\end{tikzpicture}
\label{system_model}
\caption{$K$-User MIMO MAC channel uplink}
\end{figure}

Suppose that we have $K$ sources equipped each with $N_t$ transmit antennas and a destination equipped with $N_r$ receive antennas, we will describe this configuration afterwards by $\left(N_t^K,N_r\right)$. The baseband MIMO MAC uplink channel may be described by
\begin{equation}
\underset{T\times N_r}{\Ym} =\sum^{K}_{k=1}\underset{T\times N_t}{\Xm_k} \underset{N_t\times N_r}{\Hm_k} +\underset{T\times N_r}{\Wm},
\label{MIMO-MAC}
\end{equation} 
where $T$ is the codeword signalling period, $N_r$ is the number of receive antennas, $N_t$ is the number of transmit antennas, $\Ym$ is the received signal matrix, and $\Xm_{km}$ is the space-time mapping of the $k$-th source $\left(S_k\right)$ message addressed to the destination $\left(D\right)$ (i.e., $\Xm_k=f\left(\uv_k\right)$, where $f\left(.\right)$ is an injective function for the code to be uniquely decodable, which completely defines the coding scheme). The channel coefficients matrix from the $k$-th source to the common destination is denoted $\Hm_k$ whose entries $h_{ij} \sim \Cc \Nc(0,1)$, and $\Wm$ is the noise matrix at the destination with entries $w_{ij} \sim \Cc \Nc(0,N_{0})$. According to the above model, the $t$-th row of $\Xm_k$ denotes the symbols transmitted through the $N_t$ transmit antennas during the $t$-th channel use while the $n$-th column denotes the symbols transmitted through the $n$-th transmit antenna during the codeword signalling period $T$. We assume perfect synchronization among distinct users in the network and perfect CSI is available only at the destination. 

Applying the $\text{vec}(.)$ operator to both sides of \eqref{MIMO-MAC}, we obtain
\begin{equation}
\underbrace{\text{vec}\left(\Ym\right)}_{\yv} =\sum^{K}_{k=1}\underbrace{\Id_{Nr}\otimes\Xm_k}_{\widetilde{\Xm}_k}
\underbrace{\text{vec}\left(\Hm_k\right)}_{\hv_k}+
\underbrace{\text{vec}\left(\Wm\right)}_{\wv}.
\label{MAC}
\end{equation} 
For a large class of STBCs (\cite{DAST, TAST, PERFECT_STBC}, among others), the code matrices take the form
\begin{equation}
\Xm_k=\sum^{n_k}_{i=1} \Am_{k,i}s_{k,i},\ \forall\ k=1,\dots,K,
\label{STBC}
\end{equation}
with $s_{k,i}\in\mathbb{C}$ and the $\Am_{k,i}\in\mathbb{C}^{T \times N_t}$. Replacing $\Xm_k$ by its expression in Eq.~\eqref{STBC}, the system model in Eq.~\eqref{MAC} becomes
\begin{equation*}
\yv=\sum^{K}_{k=1}\sum^{n_k}_{i=1}\left(\Id_{N_r}\otimes\Am_{k,i}\right)
\hv_ks_{k,i}+\wv,
\label{vec}
\end{equation*}
or equivalently
\begin{equation}
\yv=\sum^{K}_{k=1}\widetilde{\Hcb}_k\sv_k+\wv,
\label{eq_model}
\end{equation}
with $\widetilde{\Hcb}_k=\begin{bmatrix}\Hcb^\mathsf{T}_{k,1}&\ldots& \Hcb^\mathsf{T}_{k,N_r}\end{bmatrix}^\mathsf{T}$, $\Hcb_{k,m}=\begin{bmatrix}\Am_{k,1}\hv_{k,m}&\ldots&\Am_{k,n_k}\hv_{k,m}\end{bmatrix}$, $\hv_{k,m}$ denotes the channel coefficients vector from the $k$-th source to the $m$-th receive antenna at the destination, and $\sv_k=\begin{bmatrix}s_{k,1}&\ldots&s_{k,n_k}\end{bmatrix}^\mathsf{T}$.

%****************************************************************************************%

\subsection*{Interference Cancellation}
The objective herein is to decode distinct users data at a low-complexity cost while achieving the full-diversity gain offered by the system configuration (i.e., $N_tN_r$). For this purpose, we adopt a multi-user variant of the full-diversity PICGD \cite{2-USER_MIMO_MAC_PIC_FD} originally proposed for the case of point-to-point communications \cite{PICGD, PICGD_errata}. The PICGD is a decoding algorithm that generalizes the zero-forcing receiver, namely, it separates the transmitted symbols into disjoint sets and decodes these sets independently. For instance, to decode the $l$-th set of transmitted symbols, the receiver projects the received signal into the subspace orthogonal to the one spanned by the rest of the symbols. In \cite{PICGD,PICGD_errata}, the authors derived sufficient conditions for a STBC with a given grouping scheme to achieve full-diversity under PICGD in a point-to-point scenario. 

In the multi-user MIMO MAC setting, each symbols set will correspond to a given user's data. Suppose that we want to decode the $l$-th user's transmitted symbols, towards this end let the system model in Eq.~\eqref{eq_model} be written in the following form
\begin{equation}
\yv=\widetilde{\Hcb}_l\sv_l+\sum^{K}_{k=1,k\neq l}\widetilde{\Hcb}_k\sv_k+\wv.
\label{PIC}
\end{equation}
Therefore, the destination projects the received signal into the subspace orthogonal to the one spanned by interfering users. Let $\widetilde{\Hcb}_{\overline{l}}$ denote a basis of $\Mc\left(\begin{bmatrix}\widetilde{\Hcb}_1&\ldots&\widetilde{\Hcb}_{l-1} &\widetilde{\Hcb}_{l+1}&\ldots&\widetilde{\Hcb}_K\end{bmatrix}\right)$. Therefore, the required projection matrix $\Pm_l$ needs to satisfy $\Pm_l\widetilde{\Hcb}_{\overline{l}}=\zerov$.
This condition has a general solution described by
\begin{equation*}
\Pm_l=\Qm_l\Mm_l;\ \Mm_l=\left(\Id_{N_rT}-\widetilde{\Hcb}_{\overline{l}}\widetilde{\Hcb}_{\overline{l}}^{\dagger}\right).
\end{equation*}
It has been proved in \cite{PICGD} that taking $\Qm_l=\Id$ minimizes the maximum likelihood (ML) decoding probability of error. Hence, hereafter, we will take $\Pm_l=\left(\Id_{N_rT}-\widetilde{\Hcb}_{\overline{l}}\widetilde{\Hcb}_{\overline{l}}^{\dagger}\right)$. Left multiplying Eq.~\eqref{PIC} by $\Pm_l$ one obtains
\begin{equation*}
\Pm_l\yv=\Pm_l\widetilde{\Hcb}_l\sv_l+\Pm_l\wv.
\end{equation*}
The ML estimate of $\sv_l$ under PICGD is then given by
\begin{equation}
\sv^{\text{ML$\mid$PICGD}}_l=\text{arg}\underset{\sv_l\in\Ac_l}{\min}\Vert\Pm_l\yv-\Pm_l\widetilde{\Hcb}_l\sv_l\Vert
\label{PIC_DEC}
\end{equation} 
where $\Ac_l$ denotes the codebook spanned by $\sv_l$. For the considered class of STBCs, one has $\widetilde{\Xm}_l\hv_l=\widetilde{\Hcb}_l\sv_l$, therefore the ML estimate under PICGD may be re-written as
\begin{equation}
\widetilde{\Xm}^{\text{ML$\mid$PICGD}}_l=\text{arg}\underset{\widetilde{\Xm}_l\in\widetilde\Cc_l}{\min}\Vert\Pm_l\yv-\Pm_l\widetilde{\Xm}_l\hv_l\Vert
\label{PIC_DEC_equiv}
\end{equation} 
where $\widetilde{\Cc}_l$ denotes the codebook spanned by $\widetilde{\Xm}_l$. 

It is well known that the performance of PICGD may be significantly enhanced if combined with successive interference cancellation. At each stage, the contribution of the decoded set of symbols is subtracted from the received signal, thus reducing the dimension of the column space spanned by the interference successively. In this case, assuming that the symbols sets are enumerated with descending order of signal-to-noise ratio, the decoding process is as previously described with the only difference that at the $l$-th step, $\widetilde{\Hcb}_{\overline{l}}$ will denote a basis of $\Mc\left(\begin{bmatrix}\widetilde{\Hcb}_{l+1}&\ldots&\widetilde{\Hcb}_{K}\end{bmatrix}\right)$. 

%%%%%%%%%%%%%%%%%%%%%%%%%%%%%%%%%%%%%%%%%%%%%%%%%%%%%%%%%%%%%%%%%%%%%%%%%%%%%%%%%%%%%%%%%%

\section{Full-diversity criteria}
In what follows, we will derive new sufficient conditions for the set of STBCs in \eqref{STBC} to achieve full-diversity under PICGD and PICGD-SIC group decoding in the multi-user MIMO MAC setting. Now, we proceed towards our main theorem.
\begin{mythm}
The STBCs $\Xm_l$ expressed as in \eqref{STBC}
\begin{equation*}
\Xm_l=\sum^{n}_{i=1}\Am_{l,i}s_{l,i},\ s_{k,i}\in\mathbb{C},\ \forall\ l=1,\ldots,K
\end{equation*}
achieve full-diversity under PICGD with grouping scheme ${\sv_1,\ldots,\sv_K}$ in the multi-user MIMO MAC setting if
\begin{itemize}
\item $\text{r}\left(\Pm_l\Delta\widetilde{\Xm}_l\right)\neq 0,\forall\ \Pm_l,\ \Delta\widetilde{\Xm}_l\in\Delta\widetilde{\Cc}_l\setminus \left\lbrace \zerov\right\rbrace,\ l=1,\ldots,K$,
\item The matrix $\Pm_l\Delta\widetilde{\Xm}_l$ is of full column rank {\bf \emph{almost surely}}, $\forall\ \Delta\widetilde{\Xm}_l\in\Delta\widetilde{\Cc}_l\setminus \left\lbrace \zerov\right\rbrace,\ l=1,\ldots,K$,
\end{itemize}
where $\Delta \widetilde{\Xm}_l$ (resp. $\Delta\widetilde{\Cc}_l$) denotes the codeword difference (resp. codeword difference codebook) of the $l$-th user.
\end{mythm}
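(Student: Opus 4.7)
The plan is to bound the codeword error probability of user $l$ via a union bound over pairs of distinct codewords, reducing the problem to showing that each pairwise error probability (PEP) decays as $\text{SNR}^{-N_tN_r}$. Conditioned on \emph{all} channel realizations (so that $\Pm_l$ is fixed), the projected noise $\Pm_l\wv$ is circularly symmetric complex Gaussian with covariance $N_0\Pm_l$, and the PEP between the transmitted codeword and a rival codeword reduces to a $Q$-function evaluated at $\Vert\Pm_l\Delta\widetilde{\Xm}_l\hv_l\Vert/(\text{const.}\sqrt{N_0})$. A Chernoff-type bound then yields
\begin{equation*}
P_e\bigl(\widetilde{\Xm}_l\to\widetilde{\Xm}_l'\mid\text{channels}\bigr)\;\leq\;\tfrac{1}{2}\exp\!\left(-\tfrac{\Vert\Pm_l\Delta\widetilde{\Xm}_l\hv_l\Vert^2}{4N_0}\right).
\end{equation*}

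The second step is to integrate out $\hv_l\sim\Cc\Nc(\zerov,\Id_{N_tN_r})$, which is statistically independent of both the interfering channels (hence of $\Pm_l$) and of the codeword difference. Using the standard complex Gaussian moment-generating identity, the conditional expectation (on the interfering channels) of the bound becomes
\begin{equation*}
\mathbb{E}_{\hv_l}\bigl[P_e\bigr]\;\leq\;\tfrac{1}{2}\det\!\left(\Id_{N_tN_r}+\tfrac{1}{4N_0}\bigl(\Pm_l\Delta\widetilde{\Xm}_l\bigr)^{\mathsf{H}}\Pm_l\Delta\widetilde{\Xm}_l\right)^{-1}.
\end{equation*}
Expanding this determinant as a product over the eigenvalues of $\bigl(\Pm_l\Delta\widetilde{\Xm}_l\bigr)^{\mathsf{H}}\Pm_l\Delta\widetilde{\Xm}_l$ exposes that the high-SNR decay exponent of the conditional expectation is exactly $\text{r}\bigl(\Pm_l\Delta\widetilde{\Xm}_l\bigr)$. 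Here the first hypothesis of the theorem plays the role of excluding the pathological realization where the rank collapses to zero (which would freeze the bound at $\tfrac{1}{2}$ for that realization), whereas the second hypothesis forces the rank to attain the maximal value $N_tN_r$ on a probability-one event in the interfering channels.

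The third step is to take the outer expectation over the interfering channels $\{\hv_k\}_{k\neq l}$. Since the locus on which $\Pm_l\Delta\widetilde{\Xm}_l$ fails to be of full column rank has Lebesgue measure zero by the second hypothesis, the outer average is carried by the probability-one event on which the decay exponent is $N_tN_r$, thereby producing an SNR exponent of $N_tN_r$ per PEP. A union bound over the finite codeword-difference set then delivers a codeword error probability decaying as $\text{SNR}^{-N_tN_r}$, which is the full-diversity claim of the theorem.

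\textbf{Main obstacle.} The delicate point — and the precise reason the relaxation from the strict full-rank requirement of \cite{2-USER_MIMO_MAC_PIC} to the present \emph{almost sure} version is nontrivial — is that even inside the good event the smallest nonzero eigenvalue of $\bigl(\Pm_l\Delta\widetilde{\Xm}_l\bigr)^{\mathsf{H}}\Pm_l\Delta\widetilde{\Xm}_l$ is a random quantity that can come arbitrarily close to zero, so the naive high-SNR replacement $(1+\lambda_i/(4N_0))^{-1}\approx 4N_0/\lambda_i$ risks producing a non-integrable $1/\lambda_{\min}$ in the outer expectation. Controlling this requires a quantitative anti-concentration bound on $\mathbb{P}\bigl[\lambda_{\min}<\epsilon\bigr]$; because the entries of $\Pm_l\Delta\widetilde{\Xm}_l$ are (rational, and after clearing denominators polynomial) functions of the Gaussian interfering channels, I would either appeal to a Carbery--Wright-type estimate for polynomials in Gaussian variables, or adapt the smoothed-determinant bookkeeping of \cite{PICGD,PICGD_errata} to absorb the singular locus into an integrable tail, thereby preserving the full $N_tN_r$ SNR exponent after the outer averaging.
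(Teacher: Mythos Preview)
Your approach is essentially the same as the paper's: conditional PEP $\to$ Chernoff bound $\to$ average over $\hv_l$ using its independence from $\Pm_l$ $\to$ product over the nonzero eigenvalues $\to$ high-SNR approximation $\to$ outer average over the interfering channels, with condition~1 controlling integrability and condition~2 pinning the rank to $N_tN_r$ almost surely.

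The one place where the paper does something you did not anticipate is precisely your ``main obstacle.'' Rather than invoking Carbery--Wright or a smoothed-determinant argument, the paper separates the SNR-dependent and eigenvalue-dependent factors by a single application of Cauchy--Schwarz:
\begin{equation*}
\mathbb{E}_{\Pm_l}\!\left[\Bigl(\tfrac{4}{\text{SNR}}\Bigr)^{r_l}\frac{1}{\prod_{i=1}^{r_l}\lambda_{l,i}^2}\right]
\;\le\;
\sqrt{\mathbb{E}_{r_l}\!\left[\Bigl(\tfrac{4}{\text{SNR}}\Bigr)^{2r_l}\right]\;\mathbb{E}_{\pmb{\Lambda}_l}\!\left[\frac{1}{\prod_{i=1}^{r_l}\lambda_{l,i}^4}\right]}.
\end{equation*}
Condition~2 then makes the first factor exactly $(4/\text{SNR})^{N_tN_r}$, while condition~1 (rank never zero for \emph{any} realization) is invoked to assert that the second factor is a finite constant independent of SNR. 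This is considerably lighter than anti-concentration for Gaussian polynomials; on the other hand, the paper's claim that $r_l\neq 0$ alone guarantees $\mathbb{E}\bigl[\prod_i\lambda_{l,i}^{-4}\bigr]<\infty$ is asserted rather than proved, so your instinct that integrability of $1/\lambda_{\min}$ is the genuine analytic content here is well founded. If you wanted to make the paper's argument airtight, the Carbery--Wright route you outline (or the adaptation of \cite{PICGD,PICGD_errata}) would be exactly what is needed to justify that finite-second-moment assertion.
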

It is worth noting that the second condition implies that $\Delta\Xm_l$ is of full column rank $ \forall\  \Delta\Xm_l\in\Delta\Cc_l\setminus\left\lbrace\zerov\right\rbrace$, in other words, achieving the full-diversity under ML for a certain STBC is a prerequisite to achieve the full-diversity under PICGD as expected. Moreover, the second condition in Theorem 1 implies that the matrix $\Pm_l\Delta\widetilde{\Xm}_l$ is of full column rank {\bf \emph{almost surely}}, which is significantly less restrictive than the provided sufficient conditions by the authors in \cite{2-USER_MIMO_MAC_PIC}, where this matrix is required to be of full column rank $\forall\ \Pm_l,\ \Delta\widetilde{\Xm}_l\in\Delta\widetilde{\Cc}_l\setminus\left\lbrace\zerov\right\rbrace$. This enables us to obtain higher rates IC schemes as will be shortly shown.
\begin{proof} 
A MIMO system is said to achieve a diversity gain $d$ if the probability of error $\text{P}_e$ can be upper-bounded in the high SNR regime as
\begin{equation}
\text{P}_e\lesssim\alpha\ \text{SNR}^{-d}
\end{equation}
where $\alpha$ is a positive constant. According to \eqref{PIC_DEC_equiv}, the conditional pairwise error probability (PEP) may be obtained as in \cite{FWC}
\begin{equation}
\mathbb{P}\left[\widetilde{\Xm}_l\rightarrow\widetilde{\Xm}_l'\mid\Pm_l,\hv_l\right]=Q\left(\sqrt{\frac{\text{SNR}\Vert\Pm_l\Delta\widetilde{\Xm}_l\hv_l\Vert^2}{2}}\right)
\end{equation}
where $\Delta\widetilde{\Xm}_l=\widetilde{\Xm}_l-\widetilde{\Xm}_l'$. Thanks to the independence between $\Pm_l$ and $\hv_l$ for $l=1,\ldots,K$, the average PEP can be evaluated in two steps, namely by averaging over the distribution of $\hv_l$ for a fixed $\Pm_l$ followed by averaging over the distribution of $\Pm_l$.
Towards this end, the conditional expectation of the PEP can be expressed as
\begin{equation}
\mathbb{P}\left[\widetilde{\Xm}_l\rightarrow\widetilde{\Xm}_l'\mid\Pm_l\right]=\mathbb{E}_{\hv_l\mid\Pm_l}\left[Q\left(\sqrt{\frac{\text{SNR}\ \widetilde{\hv}_l^\mathsf{H}\pmb{\Lambda}_l\widetilde{\hv}_l}{2}}\right)\right]
\end{equation}
where $\widetilde{\hv}_l=\Vm_l\hv_l$, $\Vm_l$ is unitary, $\pmb{\Lambda}_l=\text{diag}\left(\lambda_{l,1}^2,\ldots,\lambda_{l,N_tN_r}^2\right)$, and the $\lambda_{l,i}$'s denote the singular values of $\Pm_l\Delta\widetilde{\Xm}_l$. Let $r_l$ denote the rank of $\Pm_l\Delta\widetilde{\Xm}_l$, therefore, the above equation reduces to
\begin{equation}
\begin{split}
\mathbb{P}\left[\widetilde{\Xm}_l\rightarrow\widetilde{\Xm}_l'\mid\Pm_l\right]
&=\mathbb{E}_{\hv_l\mid\Pm_l}\left[Q\left(\sqrt{\frac{\text{SNR}\sum_{i=1}^{r_l}\lambda_{l,i}^2\vert\widetilde{\hv}_l(i)\vert^2}{2}}\right)\right]\\
&\leq\mathbb{E}_{\hv_l\mid\Pm_l}\left[\exp\left(-\frac{\text{SNR}\sum_{i=1}^{r_l}\lambda_{l,i}^2\vert\widetilde{\hv}_l(i)\vert^2}{4}\right)\right].
\end{split}
\end{equation}
Thanks to the independence between $\Pm_l$ and $\hv_l$, conditioning on $\Pm_l$ does not affect the distribution of $\hv_l$, and since $\Vm_l$ depends on $\Pm_l$, hence fixed unitary matrix, one has $\widetilde{\hv}_l\sim\Cc\Nc\left(\zerov,\Id_{N_tN_r}\right)$. Consequently, the above inequality reduces to
\begin{equation}
\mathbb{P}\left[\widetilde{\Xm}_l\rightarrow\widetilde{\Xm}_l'\mid\Pm_l\right]\leq\prod^{r_l}_{i=1}\frac{1}{1+\frac{\text{SNR}\ \lambda_{l,i}^2}{4}}
\end{equation}  
which in the high SNR regime simplifies to
\begin{equation}
\mathbb{P}\left[\widetilde{\Xm}_l\rightarrow\widetilde{\Xm}_l'\mid\Pm_l\right]\lesssim
\left(\frac{4}{\text{SNR}}\right)^{r_l}\frac{1}{\prod^{r_l}_{i=1}\lambda_{l,i}^2}.
\label{high_SNR}
\end{equation} 
Finally, the average PEP is obtained as
\begin{equation}
\mathbb{P}\left[\widetilde{\Xm}_l\rightarrow\widetilde{\Xm}_l'\right]\lesssim\mathbb{E}_{\Pm_l}\left[\left(\frac{4}{\text{SNR}}\right)^{r_l}\frac{1}{\prod^{r_l}_{i=1}\lambda_{l,i}^2}\right].
\end{equation}
On the other hand, the first condition of Theorem 1 implies that $r_l\neq 0,\ \forall\ \Pm_l,\ \Delta\widetilde{\Xm}_l\in\Delta\widetilde{\Cc}_l\setminus \left\lbrace \zerov\right\rbrace,\ l=1,\ldots,K$, hence, the second term of the R.H.S in the above inequality has a finite second moment and the Cauchy-Shwarz inequality can be applied to obtain
\begin{equation}
\mathbb{P}\left[\widetilde{\Xm}_l\rightarrow\widetilde{\Xm}_l'\right]\lesssim\sqrt{\mathbb{E}_{r_l}\left[\left(\frac{4}{\text{SNR}}\right)^{2r_l}\right]\mathbb{E}_{\pmb{\Lambda}_l}\left[\frac{1}{\prod^{r_l}_{i=1}\lambda_{l,i}^4}\right]}.
\end{equation}   
If $\Pm_l\Delta\Xm_l$ is of full column rank {\bf \emph{almost surely}}, or equivalently $r_l\overset{a.s.}{=}N_tN_r$, where $\overset{a.s.}{=}$ denotes \emph{equals almost surely}, the above inequality can be re-written as
\begin{equation}
\mathbb{P}\left[\widetilde{\Xm}_l\rightarrow\widetilde{\Xm}_l'\right]\lesssim
\alpha\left(\frac{4}{\text{SNR}}\right)^{N_tN_r}
\end{equation}
where $\alpha$ is a positive finite number. 
\end{proof}
The second condition of Theorem 1 can be easily checked thanks to the following lemma, which is a generalized form of the rank equality in \cite{MATRICES_PROD} 
\begin{mylem}
For the two matrices $\Am \in \mathbb{C}^{p\times q}$ and $\Cm \in \mathbb{C}^{p\times m}$ such that:
\begin{IEEEeqnarray}{c;c;c}
\text{r}\left(\Am\right)+\text{r}\left(\Cm\right)&=&p \label{cond1}\\
\Cm^\mathsf{H}\Am &=& \zerov \label{cond2},
\end{IEEEeqnarray}
and $\Am$ is of full column rank, we have:
\begin{equation*}
\text{r}\left(\begin{bmatrix}\Am & \Vm \end{bmatrix}\right)=\text{r}\left(\Cm^\mathsf{H}\Vm\Cm\right)+\text{r}\left(\Am\right).   
\end{equation*} 
where $\Vm \in \mathbb{C}^{p\times p}$ is an arbitrary positive semi-definite matrix. 
\end{mylem}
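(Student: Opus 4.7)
The plan is to reduce $\text{r}(\begin{bmatrix}\Am & \Vm\end{bmatrix})$ to two additive contributions, one coming from $\Am$ and one from a projected version of $\Vm$, and then rewrite the latter using $\Cm$ by exploiting the positive semi-definiteness of $\Vm$. First I would observe that $\Cm^\mathsf{H}\Am=\zerov$ yields $\Mc(\Cm)\subseteq\Mc(\Am)^\perp$, and combined with $\text{r}(\Am)+\text{r}(\Cm)=p$ and the full column rank of $\Am$, a dimension count forces $\Mc(\Cm)=\Mc(\Am)^\perp$ in $\mathbb{C}^p$. Consequently, the orthogonal projector onto $\Mc(\Cm)$ is $\Pm_C=\Id_p-\Am(\Am^\mathsf{H}\Am)^{-1}\Am^\mathsf{H}$.

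Next I would split $\Vm=(\Id_p-\Pm_C)\Vm+\Pm_C\Vm$. Since the columns of $(\Id_p-\Pm_C)\Vm$ lie in $\Mc(\Am)$, the column space of $\begin{bmatrix}\Am & \Vm\end{bmatrix}$ equals that of $\begin{bmatrix}\Am & \Pm_C\Vm\end{bmatrix}$; and since $\Mc(\Pm_C\Vm)\subseteq\Mc(\Cm)=\Mc(\Am)^\perp$, the two blocks intersect only at $\zerov$, yielding $\text{r}(\begin{bmatrix}\Am & \Vm\end{bmatrix})=\text{r}(\Am)+\text{r}(\Pm_C\Vm)$. It therefore remains to establish the identity $\text{r}(\Pm_C\Vm)=\text{r}(\Cm^\mathsf{H}\Vm\Cm)$.

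For this identity I would invoke the reduced spectral decomposition $\Vm=\Um\pmb{\Lambda}\Um^\mathsf{H}$, where $\Um\in\mathbb{C}^{p\times \text{r}(\Vm)}$ has orthonormal columns and $\pmb{\Lambda}$ is a positive diagonal matrix. Since $\Um^\mathsf{H}$ has full row rank and $\pmb{\Lambda}$ is invertible, $\text{r}(\Pm_C\Vm)=\text{r}(\Pm_C\Um\pmb{\Lambda}\Um^\mathsf{H})=\text{r}(\Pm_C\Um)$. On the other hand, writing $\Cm^\mathsf{H}\Vm\Cm=(\Cm^\mathsf{H}\Um\pmb{\Lambda}^{1/2})(\Cm^\mathsf{H}\Um\pmb{\Lambda}^{1/2})^\mathsf{H}$ and using $\text{r}(\Xm\Xm^\mathsf{H})=\text{r}(\Xm)$, one obtains $\text{r}(\Cm^\mathsf{H}\Vm\Cm)=\text{r}(\Cm^\mathsf{H}\Um)$. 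Finally, factoring $\Cm=\Cm_0\Rm$ where the columns of $\Cm_0$ form an orthonormal basis of $\Mc(\Cm)$ and $\Rm$ has full row rank, one has $\Pm_C=\Cm_0\Cm_0^\mathsf{H}$ and $\Cm^\mathsf{H}\Um=\Rm^\mathsf{H}\Cm_0^\mathsf{H}\Um$. Because $\Cm_0$ has full column rank and $\Rm^\mathsf{H}$ is injective, both $\text{r}(\Pm_C\Um)$ and $\text{r}(\Cm^\mathsf{H}\Um)$ collapse to $\text{r}(\Cm_0^\mathsf{H}\Um)$, which closes the argument.

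The main obstacle, and the only step where the positive semi-definiteness of $\Vm$ is genuinely used, is bridging the one-sided projected form $\Pm_C\Vm$ with the symmetric congruence $\Cm^\mathsf{H}\Vm\Cm$; the spectral decomposition of $\Vm$ permits both ranks to be pushed through the invertible middle factor and compared via the two easily related matrices $\Pm_C\Um$ and $\Cm^\mathsf{H}\Um$. The remainder of the proof is routine bookkeeping with orthogonal complements and rank-preserving multiplications by full-rank matrices.
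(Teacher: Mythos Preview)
Your argument is correct and follows the same overall architecture as the paper's proof: both introduce the orthogonal projector $\Mm=\Id_p-\Am\Am^\dagger$ (your $\Pm_C$), split $\Vm$ along $\Mc(\Am)$ and its complement to obtain $\text{r}\big(\begin{bmatrix}\Am&\Vm\end{bmatrix}\big)=\text{r}(\Am)+\text{r}(\Mm\Vm)$, and then bridge the one-sided projection $\Mm\Vm$ with the symmetric quantity $\Cm^\mathsf{H}\Vm\Cm$ using a square root of $\Vm$.

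The difference lies in how that bridge is built. The paper writes $\Cm^\mathsf{H}=\Qm\Mm$ and invokes the Frobenius rank inequality $\text{r}(\Qm\Mm\Vm^{1/2})+\text{r}(\Mm)\geq\text{r}(\Qm\Mm)+\text{r}(\Mm\Vm^{1/2})$, together with the dimension hypothesis $\text{r}(\Qm\Mm)=\text{r}(\Mm)$, to force $\text{r}(\Cm^\mathsf{H}\Vm^{1/2})=\text{r}(\Mm\Vm^{1/2})$. You instead factor $\Cm=\Cm_0\Rm$ with $\Cm_0$ an orthonormal basis of $\Mc(\Cm)$ and $\Rm$ of full row rank, observe $\Pm_C=\Cm_0\Cm_0^\mathsf{H}$, and reduce both $\text{r}(\Pm_C\Um)$ and $\text{r}(\Cm^\mathsf{H}\Um)$ to $\text{r}(\Cm_0^\mathsf{H}\Um)$ via rank-preserving multiplications. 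Your route is slightly more elementary in that it avoids the Frobenius inequality altogether, relying only on the fact that left or right multiplication by a full-rank matrix preserves rank; the paper's route is a bit shorter on bookkeeping since it does not introduce the auxiliary factorizations of $\Vm$ and $\Cm$. Either way, the positive semi-definiteness of $\Vm$ enters at exactly the same point and for the same reason.
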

\emph{Proof}: see Appendix A.\\
It is worth noting that if $\Mc\left(\Am\right)\subset\Mc\left(\Vm\right)$, then the above equality reduces to:
\begin{equation*}
\text{r}\left(\begin{bmatrix}\Am & \Vm \end{bmatrix}\right)=
\text{r}\left(\Vm\right)\Rightarrow \text{r}\left(\Cm^\mathsf{H}\Vm\Cm\right)=
\text{r}\left(\Vm\right)-\text{r}\left(\Am\right)  
\end{equation*} 
which is the same result obtained in \cite{MATRICES_PROD}. 
The second condition of Theorem 1 is satisfied if
\begin{equation}
\text{r}\left(\Pm_l\Delta\widetilde{\Xm}_l\right)\overset{a.s.}
{=}\text{r}\left(\Delta\widetilde{\Xm}_l\right),\
\forall \Delta\widetilde{\Xm}_l\in\Delta\widetilde{\Cc}_l\setminus \left\lbrace \zerov\right\rbrace,\ l=1,\ldots,K.
\label{FD_cond}
\end{equation}
Recalling that $\text{r}\left(\Am\right)=\text{r}\left(\Am\Am^\mathsf{H}\right)$, we have
\begin{equation*} \text{r}\left(\Pm_l\Delta\widetilde{\Xm}_l\right)=
\text{r}\left(\Pm_l\Delta\widetilde{\Xm}_l\Delta\widetilde{\Xm}^\mathsf{H}_l\Pm_l\right).
\end{equation*}
One can easily verify that taking $\Vm=\Delta\widetilde{\Xm}_l\Delta\widetilde{\Xm}^\mathsf{H}_l,\ \Cm=\Pm_l$, and $\Am=\widetilde{\Hcb}_{\overline{l}}$ satisfies \eqref{cond1} and \eqref{cond2}, which implies that
\begin{equation}
\begin{split}
\text{r}\left(\Pm_l\Delta\widetilde{\Xm}_l\right)&=\text{r}\left(\begin{bmatrix}\widetilde{\Hcb}_{\overline{l}}&\Delta\widetilde{\Xm}_l\Delta \widetilde{\Xm}^\mathsf{H}_l\end{bmatrix}\right)-\text{r}\left(\widetilde{\Hcb}_{\overline{l}}\right)\\
&\overset{(a)}{=}\text{r}\left(\begin{bmatrix}\widetilde{\Hcb}_{\overline{l}}&\Delta \widetilde{\Xm}_l\end{bmatrix}\right)-\text{r}\left(\widetilde{\Hcb}_{\overline{l}}\right),
\end{split}
\label{Main_res}
\end{equation}
where $(a)$ follows from $\Mc\left(\Am\right)=\Mc\left(\Am\Am^\mathsf{H}\right)$ for arbitrary matrix $\Am \in \mathbb{C}^{m\times n}$ \cite{MATRIX_ANALYSIS}. Combining Eq.~\eqref{FD_cond}, and Eq.~\eqref{Main_res}, the full-diversity is achieved under PICGD if
\begin{equation*}
\text{r}\left(\begin{bmatrix}\widetilde{\Hcb}_{\overline{l}}&\Delta \widetilde{\Xm}_l\end{bmatrix}\right)\overset{a.s.} {=}\text{r}\left(\widetilde{\Hcb}_{\overline{l}}\right)+\text{r}\left(\Delta\widetilde{\Xm}_l\right),\
\forall\ \Delta\widetilde{\Xm}_l\in\Delta\widetilde{\Cc}_l\setminus \left\lbrace \zerov\right\rbrace,\ l=1,\ldots,K.
\end{equation*}
On the other hand, $\widetilde{\Hcb}_{\overline{l}}$ is of full column rank by definition, and $\Delta\widetilde{\Xm}_l\in\Delta\widetilde{\Cc}_l\setminus \left\lbrace \zerov\right\rbrace$ is of full column rank also thanks to the full-diversity under ML decoding assumption. Thus the above equality is satisfied if and only if $\begin{bmatrix}\widetilde{\Hcb}_{\overline{l}}&\Delta \widetilde{\Xm}_l\end{bmatrix}$ is of full column rank {\bf \emph{almost surely}} or equivalently
\begin{equation*}
\begin{split}
\widetilde{\Hcb}_{\overline{l}}\uv+\Delta\widetilde{\Xm}_l\vv &\overset{a.s.}{\neq} \zerov,\\
\forall\begin{bmatrix}\uv&\vv\end{bmatrix}\in \mathbb{C}\setminus \left\lbrace \zerov\right\rbrace,\forall\ \Delta\widetilde{\Xm}_l\in\Delta\widetilde{\Cc}_l&\setminus \left\lbrace \zerov\right\rbrace,\forall\ l=1,\ldots,K.
\end{split}
\end{equation*}
As the equivalent channel matrix is nothing but vertical concatenation of the equivalent channel matrices at each receive antenna, it suffices to check the above condition for only one receive antenna. A direct consequence of the above discussion is the following lemma.
\begin{mylem}
If STBCs $\Xm_l$ expressed as in \eqref{STBC}
\begin{equation*}
\Xm_l=\sum^{n}_{i=1}\Am_{l,i}s_{l,i},\ s_{k,i}\in\mathbb{C},\ \forall\ l=1,\ldots,K
\end{equation*}
achieve full-diversity under PICGD with grouping scheme ${\sv_1,\ldots,\sv_K}$ in the multi-user MIMO MAC, the rate per user $R$ is upper-bounded as
\begin{equation*}
R\leq\frac{1}{K-1}\left(1-\frac{N_t}{T}\right)
\end{equation*}
\end{mylem}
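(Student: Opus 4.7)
The plan is to leverage the single-receive-antenna reduction noted just before the lemma: full-diversity under PICGD is equivalent to requiring, for every nonzero codeword difference $\Delta\Xm_l$ and every $l=1,\ldots,K$, that the $T\times(\cdot)$ matrix $\begin{bmatrix}\Hcb_{\overline{l},1}&\Delta\Xm_l\end{bmatrix}$ have full column rank almost surely over the random interfering channels $\{\hv_{k,1}\}_{k\neq l}$, where $\Hcb_{\overline{l},1}$ denotes any basis of $\Mc\bigl(\begin{bmatrix}\Hcb_{1,1}&\ldots&\Hcb_{l-1,1}&\Hcb_{l+1,1}&\ldots&\Hcb_{K,1}\end{bmatrix}\bigr)$. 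The rate bound will then fall out of a direct column count.

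I would first establish that almost surely $\text{r}(\Hcb_{\overline{l},1})=\min\bigl(T,(K-1)n\bigr)$. Since the matrices $\{\Am_{k,i}\}_{i=1}^{n}$ must be linearly independent for the code to be uniquely decodable, for any nonzero coefficient vector $\pmb{\alpha}$ the matrix $\sum_i\alpha_i\Am_{k,i}$ has a proper kernel in $\mathbb{C}^{N_t}$, so for almost every $\hv_{k,1}$ the vectors $\{\Am_{k,i}\hv_{k,1}\}_{i=1}^{n}$ are linearly independent in $\mathbb{C}^T$ whenever $n\leq T$; invoking the independence of channels across users then places the subspaces $\Mc(\Hcb_{k,1})$ in general position, giving the claimed generic dimension of their sum. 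Next, since full-diversity under PICGD presupposes full-diversity under ML decoding (as observed right after Theorem~1), $\Delta\Xm_l$ is of full column rank $N_t$. Therefore $\begin{bmatrix}\Hcb_{\overline{l},1}&\Delta\Xm_l\end{bmatrix}$ has $\min\bigl(T,(K-1)n\bigr)+N_t$ columns, and the full-column-rank requirement forces
\begin{equation*}
\min\bigl(T,(K-1)n\bigr)+N_t\leq T.
\end{equation*}

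The case $(K-1)n\geq T$ would impose $N_t\leq 0$, which is impossible, so we must have $(K-1)n<T$ and the inequality collapses to $(K-1)n\leq T-N_t$. Dividing through by $T$ and identifying the per-user rate $R=n/T$ yields the desired bound. The only step demanding any delicacy is the generic-rank claim; this is a routine Zariski-open / measure-zero argument resting on the linear independence of the code matrices together with the independence of channel realizations across users, and I do not anticipate a substantive obstacle there.
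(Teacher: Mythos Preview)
Your approach is exactly the paper's: reduce to a single receive antenna and bound the column count of $\begin{bmatrix}\Hcb_{\overline{l},1}&\Delta\Xm_l\end{bmatrix}$ against the row count $T$. The paper's proof is in fact terser than yours---it simply asserts $(K-1)n+N_t\leq T$ from the full-column-rank requirement and rearranges, implicitly treating $\Hcb_{\overline{l},1}$ as having $(K-1)n$ columns without comment.

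Where you go beyond the paper is in trying to justify that $(K-1)n$ count, and there your sketch has a real gap. The inference ``for every nonzero $\pmb{\alpha}$ the matrix $\sum_i\alpha_i\Am_{k,i}$ has a proper kernel, \emph{hence} $\{\Am_{k,i}\hv_{k,1}\}_i$ are linearly independent for almost every $\hv_{k,1}$'' does not go through: the union over uncountably many $\pmb{\alpha}$ of those proper kernels can exhaust $\mathbb{C}^{N_t}$. Concretely, with $T=N_t=n=2$ take $\Am_{k,1}=\ev_1\ev_1^\mathsf{T}$ and $\Am_{k,2}=\ev_1\ev_2^\mathsf{T}$; these are linearly independent as matrices, yet $\Am_{k,1}\hv=h_1\ev_1$ and $\Am_{k,2}\hv=h_2\ev_1$ are collinear for \emph{every} $\hv$, so $\Hcb_{k,1}$ never has rank $2$. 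Your Zariski/measure-zero argument needs a single witness $\hv$ at which the rank is $n$, and mere linear independence of the $\Am_{k,i}$ does not supply one. Closing this requires an additional structural hypothesis on the code (e.g., that some specialization of $\hv$ makes $\Hcb_{k,1}$ full rank); the paper's own proof simply leaves the point unaddressed.
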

\begin{proof}
The proof follows directly from the fact that assuming the matrix $\begin{bmatrix}\Hcb_{\overline{l}}&\Delta\Xm_l\end{bmatrix}$ is of full column rank \emph{almost surely} implies that
\begin{equation*}
\left(K-1\right)n+N_t\leq T.
\end{equation*}
By rearranging the terms, and recalling that the rate per user $R$ is equal to $\frac{n}{T}$, the desired result is obtained.
\end{proof}
\begin{mycor}
The STBCs $\Xm_l$ expressed as in \eqref{STBC}
\begin{equation*}
\Xm_l=\sum^{n}_{i=1}\Am_{l,i}s_{l,i},\ s_{l,i}\in\mathbb{C}, \forall\ l=1,\ldots,K
\end{equation*}
achieve full-diversity under PICGD-SIC with grouping scheme ${\sv_1,\ldots,\sv_K}$ in the multi-user MIMO MAC setting if
\begin{itemize}
\item $\text{r}\left(\Pm_l\Delta\widetilde{\Xm}_l\right)\neq 0,\forall\ \Pm_l,\ \Delta\widetilde{\Xm}_l\in\Delta\widetilde{\Cc}_l\setminus \left\lbrace \zerov\right\rbrace,\ l=1,\ldots,K$,
\item The matrix $\Pm_l\Delta\widetilde{\Xm}_l$ is of full column rank {\bf \emph{almost surely}}, $\forall\ \Delta\widetilde{\Xm}_l\in\Delta\widetilde{\Cc}_l\setminus \left\lbrace \zerov\right\rbrace,\ l=1,\ldots,K$,
\end{itemize}
\end{mycor}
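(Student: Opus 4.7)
The plan is to mimic the proof of Theorem~1 stage by stage, exploiting the fact that at the $l$-th SIC stage the projection matrix $\Pm_l$ is now defined with respect to the basis $\widetilde{\Hcb}_{\overline{l}}$ of $\Mc\left(\begin{bmatrix}\widetilde{\Hcb}_{l+1}&\ldots&\widetilde{\Hcb}_{K}\end{bmatrix}\right)$, i.e.\ only the \emph{not-yet-decoded} users contribute to the interference subspace. Because this matrix is obtained from the Theorem~1 version by dropping columns, the column rank of $\begin{bmatrix}\widetilde{\Hcb}_{\overline{l}}&\Delta\widetilde{\Xm}_l\end{bmatrix}$ can only be preserved or improved; consequently the almost-sure full-column-rank condition on $\Pm_l\Delta\widetilde{\Xm}_l$ inherited from Theorem~1 remains valid, and so does the non-vanishing rank condition.

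First I would condition on the event $\mathcal{E}_l^c$ that all the previously decoded users $1,\ldots,l-1$ have been detected correctly, so that their contributions are perfectly subtracted and the residual signal at stage $l$ is exactly $\widetilde{\Hcb}_l\sv_l+\sum_{k=l+1}^{K}\widetilde{\Hcb}_k\sv_k+\wv$. Under this event the analysis reduces \emph{verbatim} to that of Theorem~1 applied to a $(K-l+1)$-user MIMO MAC with user $l$ as target; the conditional PEP can therefore be upper-bounded following the same chain of inequalities (Chernoff bound on the $Q$-function, Gaussianity of $\widetilde{\hv}_l$ after a unitary change of coordinates, Cauchy--Schwarz on the two factors of \eqref{high_SNR}) to yield $\mathbb{P}\bigl[\widetilde{\Xm}_l\to\widetilde{\Xm}_l'\mid\mathcal{E}_l^c\bigr]\lesssim \alpha_l\,\text{SNR}^{-N_tN_r}$.

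Next I would remove the conditioning via a union/total-probability argument: writing $\text{P}_e^{(l)}\leq \mathbb{P}[\mathcal{E}_l^c\text{ errs}\mid\mathcal{E}_l^c]+\mathbb{P}[\mathcal{E}_l]$ and applying induction on $l$, each $\mathbb{P}[\mathcal{E}_k]$ for $k<l$ is itself $\Oc\bigl(\text{SNR}^{-N_tN_r}\bigr)$ by the induction hypothesis (the base case $l=1$ is covered by Theorem~1 since the first-stage projection is identical). Summing the $K$ bounds and invoking the second condition of the corollary to ensure $r_l\overset{a.s.}{=}N_tN_r$ at every stage, one obtains $\text{P}_e\lesssim \alpha\,\text{SNR}^{-N_tN_r}$, which is the claimed full-diversity order.

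The main obstacle I anticipate is the rigorous handling of the error-propagation term: one must verify that the coarse union bound does not degrade the exponent, which amounts to checking that the \emph{constant} $\alpha_l$ emerging from the Cauchy--Schwarz step remains finite at every stage. This in turn boils down to verifying that $\mathbb{E}_{\pmb{\Lambda}_l}\bigl[\prod_i\lambda_{l,i}^{-4}\bigr]$ is integrable under the reduced-interference projection, which is guaranteed by the almost-sure full-column-rank assumption combined with the independence of $\Pm_l$ and $\hv_l$ inherited from the original model, exactly as in the proof of Theorem~1.
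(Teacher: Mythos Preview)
Your approach is essentially the paper's: it merely remarks that the proof is identical to that of Theorem~1 with $\widetilde{\Hcb}_{\overline{l}}$ now taken as a basis of $\Mc\bigl(\begin{bmatrix}\widetilde{\Hcb}_{l+1}&\ldots&\widetilde{\Hcb}_K\end{bmatrix}\bigr)$ and omits all further detail. Your inductive union-bound treatment of error propagation across SIC stages is an addition of rigor the paper does not supply; note only that the almost-sure full-column-rank condition is a stated hypothesis of the corollary rather than something to be ``inherited'' from Theorem~1, so the wording of your first paragraph could be tightened.
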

\noindent The proof is similar to the PICGD case with the only difference that $\Pm_l\Hcb_{\overline{l}}=\zerov$, where $\widetilde{\Hcb}_{\overline{l}}$ will denote a basis of   $\Mc\left(\begin{bmatrix}\widetilde{\Hcb}_{l+1}&\ldots&\widetilde{\Hcb}_{K}\end{bmatrix}\right)$ and is therefore omitted. 
It can be easily verified that the rate upper-bound in Lemma 2 is still valid for PICGD-SIC. 

%%%%%%%%%%%%%%%%%%%%%%%%%%%%%%%%%%%%%%%%%%%%%%%%%%%%%%%%%%%%%%%%%%%%%%%%%%%%%%%%%%%%%%%%%%

\section{Proposed codes structure}
In what follows, we provide two IC schemes for two and three users satisfying the full-diversity conditions in Theorem 1 and Corollary 1, respectively.
 
%****************************************************************************************%

\subsection{Two-User IC Scheme}
let $\Xm_1$ and $\Xm_2$ be written as
\begin{equation}
\Xm_1=\begin{bmatrix}\Cm\left(\sv_1',N_t\right)\\\underset{1\times N_t}{\zerov}\end{bmatrix},\ \Xm_2=\begin{bmatrix}\underset{1\times N_t}{\zerov}\\\Cm\left(\sv_2',N_t\right)\end{bmatrix}
\label{2_user_IC}
\end{equation}
where 
\begin{equation}
\Cm\left(\sv_i',N_t\right)=\begin{bmatrix}s_{i,1}'&0&\ldots&0&0\\
                                          s_{i,2}'&s_{i,2}'&\ddots&0&0\\
                                          \vdots&\vdots&\ddots&\ddots&\vdots\\
                                          s_{i,N_t-1}'&s_{i,N_t-1}'&\ldots&s_{i,N_t-1}'&0\\
                                          s_{i,N_t}'&s_{i,N_t}'&\ldots&s_{i,N_t}'&s_{i,N_t}'\\
                                          \vdots&\vdots&\ldots&\ldots&\vdots\\
                                          s_{i,n}'&s_{i,n}'&\ldots&s_{i,n}'&s_{i,n}'\\
                                          0&s_{i,1}'&\ldots&s_{i,1}'&s_{i,1}'\\
                                          \vdots&0&\ddots&s_{i,2}'&s_{i,2}'\\
                                          \vdots&\vdots&\ddots&\ddots&\vdots\\
                                          0&0&\ldots&0&s_{i,N_t-1}'\\
                            \end{bmatrix}.
                            \label{C}
\end{equation}
and $\sv_i'=\begin{bmatrix}s_{i,1}'&s_{i,2}'&\ldots&s
_{i,n}'&\end{bmatrix}^\mathsf{T}=\Um_n\begin{bmatrix}s_{i,1}&s_{i,2}&\ldots&s_{i,n}\end{bmatrix}^\mathsf{T}$, $\Um_n$ is the $n\times n$ full-diversity algebraic rotation \cite{FULL_DIV_ROT}, and $s_{i,j},\ \forall\ i=1,2,\ j=1,2,\ldots,n$ are drawn from a conventional QAM constellation $\Ac$. Afterwards, we will refer to this scheme by the two-user IC scheme. For the proposed two-user IC scheme, the signalling period $T$ is equal to $n+N_t$, thus giving rise to a rate per user $R$ of $\frac{n}{n+N_t}$ spcu. Moreover, one can easily verify that the rate per user of the proposed two-user IC scheme achieves the upper-bound established in Lemma 2 with equality. 
\begin{mylem}
The two-user IC scheme in \eqref{2_user_IC} achieves full-diversity under PICGD with grouping scheme $\sv_1$ and $\sv_2$. 
\end{mylem}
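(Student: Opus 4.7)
The plan is to verify the two conditions of Theorem 1 for the two-user scheme with grouping $\{\sv_1,\sv_2\}$. By the symmetry of the construction ($\Xm_1$ has a zero row appended at the bottom while $\Xm_2$ has one prepended at the top), it suffices to argue the case $l=1$; the case $l=2$ follows immediately by interchanging the roles of the first and last rows.

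First I would establish the nonzero-rank condition. The key observation is that the first row of $\Xm_2$ vanishes identically, so every column of $\widetilde{\Hcb}_{\overline{1}}=\widetilde{\Hcb}_2$ has zeros in positions $1,T+1,\ldots,(N_r-1)T+1$; the same therefore holds for every vector in $\ker\Pm_1=\Mc(\widetilde{\Hcb}_2)$. On the other hand, the first column of $\Delta\widetilde{\Xm}_1=\Id_{N_r}\otimes\Delta\Xm_1$ has first entry $\Delta s'_{1,1}$, which is nonzero for every $\Delta\sv_1\neq\zerov$ because $\Um_n$ is a full-diversity algebraic rotation \cite{FULL_DIV_ROT}, so $\Delta\sv'_1=\Um_n\Delta\sv_1$ has no vanishing coordinate. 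This column is hence outside $\ker\Pm_1$, yielding $\Pm_1\Delta\widetilde{\Xm}_1\neq\zerov$ and consequently $\text{r}(\Pm_1\Delta\widetilde{\Xm}_1)\geq 1$ for every projection $\Pm_1$.

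For the almost-sure full-column-rank condition, the discussion following Theorem 1 (culminating in Eq.~\eqref{Main_res}) reduces it to showing that $\begin{bmatrix}\Hcb_{\overline{1},1} & \Delta\Xm_1\end{bmatrix}$ at a single receive antenna has full column rank almost surely. Since this matrix is square of size $(n+N_t)\times(n+N_t)$, its determinant is a polynomial in $\hv_{2,1}$ and the claim reduces to showing this polynomial is not identically zero, which I would prove by evaluating at the specific point $\hv_{2,1}=\ev_{N_t}$. Working in the rotated basis $\Am'_{2,i}$ (which spans the same column space as $\Am_{2,i}$ since $\Um_n$ is invertible), the cyclic-shift structure of $\Cm(\sv'_2,N_t)$ forces the $n$ columns of $\Hcb'_{2,1}(\ev_{N_t})$ to collapse onto the distinct standard basis vectors $\ev_{N_t+1},\ldots,\ev_{n+N_t}$. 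After a suitable column and row permutation that gathers these identity columns, the combined matrix attains block upper-triangular form with two identity blocks on the diagonal and a final $N_t\times N_t$ diagonal block equal to the first $N_t$ rows of $\Cm(\Delta\sv'_1,N_t)$. This last block is lower triangular with diagonal entries $\Delta s'_{1,1},\ldots,\Delta s'_{1,N_t}$, so its determinant equals $\prod_{r=1}^{N_t}\Delta s'_{1,r}\neq 0$ again by the full-diversity rotation property. Both conditions of Theorem 1 are therefore satisfied.

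The main obstacle is bookkeeping: one must carefully read off the positions of each $s'_{2,k}$ inside $\Cm(\sv'_2,N_t)$ to verify that, at $\hv_{2,1}=\ev_{N_t}$, the $k$-th column of $\Hcb'_{2,1}$ collapses to $\ev_{n+k+1}$ for $k<N_t$ and to $\ev_{k+1}$ for $k\geq N_t$, and then exhibit the row/column permutation that produces the block-triangular structure. A secondary point is the switch from the $\Am_{2,i}$-basis (QAM symbols $s_{2,i}$) to the $\Am'_{2,i}$-basis (rotated symbols $s'_{2,i}$), which is harmless for the rank analysis since the two bases span the same column space.
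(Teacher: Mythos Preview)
Your argument is correct. For the first condition of Theorem~1 you use essentially the same observation as the paper: the first row of $\Xm_2$ vanishes while $\Delta s'_{1,1}\neq 0$, so some column of $\Delta\widetilde{\Xm}_1$ escapes $\ker\Pm_1$.

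For the second condition, however, you take a genuinely different route. The paper writes out the linear system $\Hcb_{2,1}\uv+\Delta\Xm_1\vv=\zerov$ explicitly as the $n+N_t$ scalar equations \eqref{PIC_p2_eq_1}--\eqref{PIC_p2_eq_T} and then performs a hand-crafted Gaussian elimination, repeatedly invoking the fact that events such as $\hv_{2,1}(N_t)=0$ or $\sum_{i\in\Ic}\hv_{2,1}(i)=0$ have probability zero to force $\uv=\vv=\zerov$. Your approach instead exploits that the concatenated matrix is \emph{square} of size $(n+N_t)\times(n+N_t)$: its determinant is a polynomial in the entries of $\hv_{2,1}$, so a single nonzero evaluation (at $\hv_{2,1}=\ev_{N_t}$, where the columns of $\Hcb'_{2,1}$ collapse to distinct standard basis vectors $\ev_{N_t+1},\ldots,\ev_{n+N_t}$ and the residual $N_t\times N_t$ block is lower triangular with diagonal $\Delta s'_{1,1},\ldots,\Delta s'_{1,N_t}$) yields almost-sure full rank immediately. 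Your method is more conceptual and avoids the lengthy case analysis; it also subsumes the ML full-rank step, since the block-triangular form already exhibits $N_t$ independent rows of $\Delta\Xm_1$. The paper's equation-by-equation approach, on the other hand, makes the role of each channel coefficient explicit and does not rely on the combined matrix being square, which is perhaps why the authors chose it as a template for the three-user PICGD-SIC proof as well.
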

\begin{proof}
We consider without loss of generality that $\Xm_1$ is being decoded as the proof for $\Xm_2$ follows similarly.
Towards this end, we prove that $\Xm_1$ achieves full-diversity under PICGD by contradiction. As a preliminary step, we prove by contradiction that $\Xm_1$ achieves the full-diversity under ML decoding. For this purpose, suppose that $\exists\ \vv\neq\zerov, \Delta\Xm_1\in\Delta\Cc_1\setminus\left\lbrace\zerov\right\rbrace\mid \Xm_1\vv=\zerov$, and consider only the first $N_t$ equations
\begin{IEEEeqnarray}{c;c;c}
v_1\Delta s_{1,1}'&=&0\label{PIC_p1_eq_1}\\
\left(v_1+v_2\right)\Delta s_{1,2}'&=&0\label{PIC_p1_eq_2}\\
\vdots&&\vdots\nonumber\\
\left(\sum^{N_t}_{i=1}v_i\right)\Delta s_{1,N_t}'&=&0\label{PIC_p1_eq_Nt}.
\end{IEEEeqnarray}
Recall that the full-diversity algebraic rotations in \cite{FULL_DIV_ROT} are designed to maximize the minimum product distance $d_{p,\min}$ defined as
\begin{equation*}
d_{p,\min}\triangleq\underset{\Delta\sv'=\Um\Delta\sv\mid\Delta\sv\in\mathbb{Z}[i]^n\setminus\left\lbrace\zerov\right\rbrace}{\min}\left\lbrace\prod^{n}_{i=1}\vert \Delta s_i'\vert\right\rbrace.
\end{equation*}  
Restricting the rotation matrices to those provided in \cite{FULL_DIV_ROT} implies that $\Delta s_i'\neq 0\ \forall\ \Delta \sv\in\mathbb{Z}[i]^n\setminus\left\lbrace\zerov\right\rbrace, i=1,\ldots,n$. Accordingly, from \eqref{PIC_p1_eq_1} we have $v_1=0$, otherwise we will have $\Delta\sv_1=\zerov$ or equivalently $\Delta\Xm_1=\zerov$ thanks to the aforementioned properties of full-diversity algebraic rotation matrices. Consequently, \eqref{PIC_p1_eq_2} implies that $v_2=0$. Proceeding in this manner, we get $\vv=\zerov$ which contradicts our assumption and concludes the first part of the proof. It can be easily verified that $\Hcb_{\overline{1}}=\Hcb_{2,1}=\begin{bmatrix}\underset{n\times 1}{\zerov}&\Dm\left(\hv_{2,1}\right)^\mathsf{T}\end{bmatrix}^\mathsf{T}$, where for a vector $\hv=\begin{bmatrix}h_1,\ldots,h_{N_t}\end{bmatrix}^\mathsf{T}$, one has
\begin{equation}
\underset{\left(n+N_t-1\right)\times n}{\Dm\left(\hv\right)}=
\begin{bmatrix}h_1&0&\ldots&0&0&\ldots&0&0\\
                0&h_1+h_2&\ddots&0&0&\ldots&0&0\\
                \vdots&\ddots&\ddots&\ddots&\vdots&\ldots&\vdots&\vdots\\ 
                0&0&\ddots&\sum^{N_t-1}_{i=1}h_i&0&\ldots&0&0\\
                0&0&\ldots&0&\sum^{N_t}_{i=1}h_i&\ddots&0&0\\
                \vdots&\vdots&\ldots&\vdots&\ddots&\ddots&\ddots&\vdots\\
                0&0&\ldots&0&0&\ddots&\sum^{N_t}_{i=1}h_i&0\\
                0&0&\ldots&0&0&\ldots&0&\sum^{N_t}_{i=1}h_i\\
                \sum^{N_t}_{i=2}h_i&0&\ldots&0&0&\ldots&0&0\\
                0&\sum^{N_t}_{i=3}h_i&\ddots&0&0&\ldots&0&0\\
                \vdots&\ddots&\ddots&\ddots&\vdots&\ldots&\vdots&\vdots\\
                0&0&&h_{N_t}&0&\ldots&0&0                 
\end{bmatrix}.
\label{D}
\end{equation}
We proceed now towards the main body of our proof and demonstrate that the proposed two-user IC scheme satisfies Theorem 1 by contradiction as well.
Suppose that $\exists\ \Delta\Xm_1\in\Delta\Cc_1\setminus\left\lbrace\zerov\right\rbrace$ which lies completely in the subspace spanned by the columns of $\Hcb_{\overline{1}}$. The above cannot be true unless $\Delta s_{1,1}'=0$, which in turns implies that $\Delta\Xm_1=\zerov$ thanks to the full-diversity algebraic rotation matrices and hence contradicts our assumption. Now suppose that $\exists \begin{bmatrix}\uv&\vv\end{bmatrix}\neq\zerov, \Delta\Xm_1\in\Cc_1\setminus\left\lbrace\zerov\right\rbrace\mid
\Hcb_{\overline{1}}\uv+\Delta\Xm_1\vv=\zerov$ with {\bf \emph{non-zero probability}}.
Therefore, one has the following set of equations
\begin{IEEEeqnarray}{c;c;c}
v_1\Delta s_{1,1}'&=&0\label{PIC_p2_eq_1}\\
\left(v_1+v_2\right)\Delta s_{1,2}'+\hv_{2,1}\left(1\right)u_1&=&0\label{PIC_p2_eq_2}\\
\left(v_1+v_2+v_3\right)\Delta s_{1,3}'+\left(\hv_{2,1}\left(1\right)+\hv_{2,1}\left(2\right)\right)u_2&=&0\label{PIC_p2_eq_3}\\
\vdots&&\vdots\nonumber\\
\left(\sum^{N_t}_{i=1}v_i\right)\Delta s_{1,N_t}'+\left(\sum^{N_t-1}_{j=1}\hv_{2,1}\left(j\right)\right)u_{N_t-1}&=&0\label{PIC_p2_eq_Nt}\\
\left(\sum^{N_t}_{i=1}v_i\right)\Delta s_{1,N_t+1}'+\left(\sum^{N_t}_{j=1}\hv_{2,1}\left(j\right)\right)u_{N_t}&=&0\label{PIC_p2_eq_Nt+1}\\
\vdots&&\vdots\nonumber\\
\left(\sum^{N_t}_{i=1}v_i\right)\Delta s_{1,n}'+\left(\sum^{N_t}_{j=1}\hv_{2,1}\left(j\right)\right)u_{n-1}&=&0\label{PIC_p2_eq_n}\\
\left(\sum^{N_t}_{i=2}v_i\right)\Delta s_{1,1}'+\left(\sum^{N_t}_{j=1}\hv_{2,1}\left(j\right)\right)u_n&=&0\label{PIC_p2_eq_n+1}\\
\left(\sum^{N_t}_{i=3}v_i\right)\Delta s_{1,2}'+\left(\sum^{N_t}_{j=2}\hv_{2,1}\left(j\right)\right)u_1&=&0\label{PIC_p2_eq_n+2}\\
\left(\sum^{N_t}_{i=4}v_i\right)\Delta s_{1,3}'+\left(\sum^{N_t}_{j=3}\hv_{2,1}\left(j\right)\right)u_2&=&0\label{PIC_p2_eq_n+3}\\
\vdots&&\vdots\nonumber\\
v_{N_t}\Delta s_{1,N_t-1}'+\left(\sum^{N_t}_{j=N_t-1}\hv_{2,1}\left(j\right)\right)u_{N_t-2}&=&0\\
\hv_{2,1}\left(N_t\right)u_{N_t-1}&=&0\label{PIC_p2_eq_T}
\end{IEEEeqnarray}
from \eqref{PIC_p2_eq_1} we have $v_1=0$, otherwise taking $\Delta s_{1,1}'=0$ implies that $\Delta \sv_1=\zerov$ or equivalently $\Delta\Xm_1=\zerov$ thanks to the full-diversity algebraic rotation. On the other hand, \eqref{PIC_p2_eq_T} implies that $u_{Nt-1}=0$ as $\hv_{2,1}\left(N_t\right)=0$ is a {\bf \emph{zero-probability}} event and is thus discarded since isolated events do not violate the second condition in Theorem 1. Consequently, one has from \eqref{PIC_p2_eq_Nt} that $\sum^{N_t}_{i=1}v_i=0$. Recall that the entries of $\hv_{2,1}$ are i.i.d., therefore
\begin{equation*} 
\mathbb{P}\left[\underset{i\in\Ic}{\sum}\hv_{2,1}\left(i\right)=0\right]=0,\ \forall\ \Ic\subseteq\left\lbrace 1,\ldots,N_t\right\rbrace.
\end{equation*}
Hence, thanks to \eqref{PIC_p2_eq_Nt+1}, \eqref{PIC_p2_eq_n} and \eqref{PIC_p2_eq_n+1}, one has that $\begin{bmatrix}u_{N_t}&\ldots&u_n\end{bmatrix}=\zerov$. Adding \eqref{PIC_p2_eq_n+2} and \eqref{PIC_p2_eq_2} results in $u_1=0$. Recalling that $v_1=0$, yields $v_2=0$ thanks to \eqref{PIC_p2_eq_2}. Similarly, adding \eqref{PIC_p2_eq_n+3} and \eqref{PIC_p2_eq_3} yields $u_2=0$ and $v_3=0$ thanks to \eqref{PIC_p2_eq_3}. Proceeding in the same manner, results in $\begin{bmatrix}\uv&\vv\end{bmatrix}=\zerov$ which contradicts our initial assumption and thus completes the proof.
\end{proof}
\begin{myexm}
Consider the following rate-3/5 two-user IC scheme:
\begin{equation}
\Xm_1=\begin{bmatrix}s_{1,1}'&s_{1,2}'&s_{1,3}'&0&0\\0&s_{1,2}'&s_{1,3}'&s_{1,1}'&0\end{bmatrix}^\mathsf{T}, \Xm_2=\begin{bmatrix}0&s_{2,1}'&s_{2,2}'&s_{2,3}'&0\\0&0&s_{2,2}'&s_{2,3}'&s_{2,1}'\end{bmatrix}^\mathsf{T},
\label{ex1_LSTBC}
\end{equation}
\end{myexm}
\noindent where $\begin{bmatrix}s_{i,1}'&s_{i,2}'&s
_{i,3}'&\end{bmatrix}^\mathsf{T}=\Um_3\begin{bmatrix}s_{i,1}&s_{i,2}&s_{i,3}\end{bmatrix}^\mathsf{T}$,$\Um_3$ is the $3\times 3$ full diversity rotation \cite{FULL_DIV_ROT}  and $s_{i,j},\ \forall\ i=1,2,\ j=1,2,3$ are drawn from a conventional QAM constellation $\Ac$. 
\begin{mylem}
For the two-user IC scheme in \eqref{2_user_IC}, the employment of the real full-diversity algebraic rotations enables separate decoding of the real and imaginary parts of $\sv_1$ and $\sv_2$ under PICGD without any loss of performance.
\end{mylem}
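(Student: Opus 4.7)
The plan is to show that under the PICGD rule \eqref{PIC_DEC_equiv} the decoding metric can be written, in the real coordinates $(\av_l,\bv_l):=(\Re\sv_l,\Im\sv_l)$, as $g_1(\av_l)+g_2(\bv_l)+c$. Since minimising each summand in isolation over its PAM alphabet then returns the same optimiser as the joint problem, this establishes the lossless decoupling claimed. Setting $\Km_l:=\Pm_l\widetilde{\Hcb}_l\Um_n$ and expanding $\Vert\Pm_l\yv-\Km_l(\av_l+j\bv_l)\Vert^2$, the only obstruction to the separation is a cross term proportional to $\av_l^\mathsf{T}\Im(\Km_l^\mathsf{H}\Km_l)\bv_l$; because $\Um_n$ is real, killing this term reduces to proving that $\Mm_l:=\widetilde{\Hcb}_l^\mathsf{H}\Pm_l\widetilde{\Hcb}_l$ is itself a \emph{real} matrix.

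I plan to attack $\Mm_l$ through the identity
\begin{equation*}
\Mm_l=\widetilde{\Hcb}_l^\mathsf{H}\widetilde{\Hcb}_l-\widetilde{\Hcb}_l^\mathsf{H}\widetilde{\Hcb}_{\overline{l}}\bigl(\widetilde{\Hcb}_{\overline{l}}^\mathsf{H}\widetilde{\Hcb}_{\overline{l}}\bigr)^{-1}\widetilde{\Hcb}_{\overline{l}}^\mathsf{H}\widetilde{\Hcb}_l,
\end{equation*}
exploiting two structural features of \eqref{2_user_IC} and \eqref{C}. First, in row $k$ of $\Cm(\sv',N_t)$ every non-zero entry is the single symbol $s'_{l,f(k)}$ with $f(k):=((k-1)\bmod n)+1$, so each row of $\widetilde{\Hcb}_l$ has at most one non-zero column and the location of that column depends only on $k$, never on the receive-antenna block. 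The columns of $\widetilde{\Hcb}_l$ (and of $\widetilde{\Hcb}_{\overline{l}}$) therefore have pairwise disjoint row supports, which forces $\widetilde{\Hcb}_l^\mathsf{H}\widetilde{\Hcb}_l$ and $\widetilde{\Hcb}_{\overline{l}}^\mathsf{H}\widetilde{\Hcb}_{\overline{l}}$ to be diagonal with real positive entries.

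Second, I will show that the zero row appended to $\Xm_1$ and prepended to $\Xm_2$, combined with the cyclic index law $f$, forces $\widetilde{\Hcb}_l^\mathsf{H}\widetilde{\Hcb}_{\overline{l}}$ into a single-cyclic-shift non-zero pattern: exactly one non-zero entry per row and per column, located at positions $(i,i-1)$ for $i=2,\ldots,n$ together with $(1,n)$. Granting this, the Schur-complement-type product $\widetilde{\Hcb}_l^\mathsf{H}\widetilde{\Hcb}_{\overline{l}}(\widetilde{\Hcb}_{\overline{l}}^\mathsf{H}\widetilde{\Hcb}_{\overline{l}})^{-1}\widetilde{\Hcb}_{\overline{l}}^\mathsf{H}\widetilde{\Hcb}_l$ collapses to a diagonal matrix whose $i$-th entry is the squared magnitude of the unique non-zero entry of the corresponding row of $\widetilde{\Hcb}_l^\mathsf{H}\widetilde{\Hcb}_{\overline{l}}$ divided by a positive real scalar. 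Subtracting from the diagonal real $\widetilde{\Hcb}_l^\mathsf{H}\widetilde{\Hcb}_l$ leaves $\Mm_l$ diagonal and real, as required.

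The main obstacle is the combinatorial verification of this second claim: for every row index $k$ that lies simultaneously in the support of a column of $\widetilde{\Hcb}_l$ and a column of $\widetilde{\Hcb}_{\overline{l}}$, the two column indices involved must consistently obey the same cyclic-shift relation. One must therefore track both the ``linear'' rows $k\le n$ and the ``wrap-around'' rows $n<k<n+N_t$, and aggregate across all $N_r$ receive-antenna blocks; the essential observation is that the one-row relative shift between $\Xm_1$ and $\Xm_2$ turns $f(k)$ for $\widetilde{\Hcb}_l$ into $f(k-1)$ for $\widetilde{\Hcb}_{\overline{l}}$ at every row where both matrices are non-zero, which is exactly a cyclic left shift by one of the column index. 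Once diagonal reality of $\Mm_l$ is in hand, the metric separates as promised and the decoupled PAM decoding of $\av_l$ and $\bv_l$ coincides with the joint PICGD optimiser.
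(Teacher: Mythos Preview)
Your proposal is correct and follows essentially the same route as the paper's proof in Appendix~B: both establish that $\widetilde{\Hcb}_l^\mathsf{H}\Pm_l\widetilde{\Hcb}_l$ is real (indeed diagonal) by showing that each Gram matrix $\widetilde{\Hcb}_k^\mathsf{H}\widetilde{\Hcb}_k$ is diagonal and that the cross-Gram $\widetilde{\Hcb}_l^\mathsf{H}\widetilde{\Hcb}_{\overline{l}}$ is a diagonally scaled cyclic-shift permutation (the paper writes this as $\pmb{\Lambda}_{12}\Pm_{\pmb{\sigma}_{12}}$ with $\pmb{\sigma}_{12}=[2,3,\ldots,n,1]$), so that the Schur-complement correction is again diagonal with real entries. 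The only cosmetic difference is the finishing step: the paper passes through a QR factorisation of $\Pm_l\Hcb_{l,1}\Um_n$ to obtain a real $\Rm$, whereas you argue directly that the cross term $\av_l^\mathsf{T}\Im(\Km_l^\mathsf{H}\Km_l)\bv_l$ vanishes; these are equivalent once $\Mm_l$ is known to be real.
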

\emph{Proof}: see Appendix B.

%***************************************************************************************%

\subsection{Three-User IC Scheme}
let $\sv_i'=\begin{bmatrix}s_{i,1}'&s_{i,2}'&\ldots&s
_{i,n}'&\end{bmatrix}^\mathsf{T}=\Um_{n}\begin{bmatrix}s_{i,1}&s_{i,2}&\ldots&s_{i,n}\end{bmatrix}^\mathsf{T}$, therefore our three-user IC scheme can be expressed as
\begin{equation}
\Xm_1=\begin{bmatrix}\Cm\left(\sv_1',N_t\right)\\
\underset{\left(n+1\right)\times N_t}{\zerov}\end{bmatrix},
\Xm_2=\begin{bmatrix}\underset{N_t\times N_t}{\zerov}\\\Cm\left(\sv_2',N_t\right)\\
\underset{\left(n-N_t+1\right)\times N_t}{\zerov}\end{bmatrix},
\Xm_3=\begin{bmatrix}\underset{\left(n+1\right)\times N_t}{\zerov}\\\Cm\left(\sv_3',N_t\right)\end{bmatrix},
\label{3_user_IC}
\end{equation}
where $n\geq 2N_t-1$, and $\Cm\left(\sv_i,N_t\right)$ is as defined in \eqref{C}. Afterwards, we will refer to this scheme by the three-user IC scheme. For the proposed three-user IC scheme, the signalling period $T$ is equal to $2n+N_T$, hence resulting in a rate per user $R$ of $\frac{n}{2n+N_t}$ spcu. Furthermore, it is straightforward to verify that the rate per user of the provided three-user IC scheme satisfies the upper-bound in Lemma 2 with equality.
\begin{mylem}
The three-user IC scheme in \eqref{3_user_IC}, achieves full-diversity under PICGD-SIC with ordered grouping scheme $\sv_1$, $\sv_2$ and $\sv_3$. 
\end{mylem}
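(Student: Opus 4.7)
The plan is to invoke Corollary~1 and verify its two conditions successively for $l=3,2,1$. For each $l$, I reduce the second condition to the almost-sure full column rank of $\begin{bmatrix}\widetilde{\Hcb}_{\overline{l}} & \Delta\widetilde{\Xm}_l\end{bmatrix}$ via Eq.~\eqref{Main_res}, and argue by contradiction exactly as in Lemma~3, exploiting (i) the full-diversity algebraic rotation property $\Delta s_{l,i}'\neq 0$ whenever $\Delta\sv_l\neq\zerov$, and (ii) the fact that every non-trivial linear combination of the i.i.d.\ complex Gaussian entries of $\hv_{k,m}$ is almost surely nonzero, so that isolated zero-probability events can be discarded. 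As in Lemma~3, it suffices to treat one receive antenna. The SIC ordering $\sv_1,\sv_2,\sv_3$ creates a nested interference structure: when decoding $\sv_l$ only the users $\sv_{l+1},\ldots,\sv_K$ remain as interference.

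For $l=3$, $\widetilde{\Hcb}_{\overline{3}}$ is empty and the task reduces to ML full-diversity of $\Xm_3$. Since $\Xm_3$ has the same $\Cm(\cdot,N_t)$ structure as the codewords in Lemma~3, the identical cascade, reading the equations from the triangular top of $\Cm(\sv_3',N_t)$ downward, yields $\vv=\zerov$. For $l=2$, only $\widetilde{\Hcb}_3$ remains. The key layout fact is that $\Xm_2$ is active in rows $N_t+1$ through $n+2N_t-1$ while $\Xm_3$ is silent in rows $N_t+1$ through $n+1$. Equations from that window reproduce the cascade $v_1\Delta s_{2,1}'=0,\ (v_1+v_2)\Delta s_{2,2}'=0,\ldots$, forcing the $\vv$-components to vanish; the last rows, where $\Xm_2$ is silent and $\Xm_3$ alone appears, then cascade $\uv$ to $\zerov$ through the almost-sure nonzero property of partial sums of $\hv_3$'s entries arising in the Toeplitz envelope~\eqref{D}.

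For $l=1$, the main technical hurdle, the interference subspace is $\Mc\bigl(\begin{bmatrix}\widetilde{\Hcb}_2 & \widetilde{\Hcb}_3\end{bmatrix}\bigr)$, and I must show that $\begin{bmatrix}\widetilde{\Hcb}_2 & \widetilde{\Hcb}_3 & \Delta\widetilde{\Xm}_1\end{bmatrix}$ has full column rank almost surely. The first $N_t$ rows involve only $\Xm_1$ (since $\Xm_2$ and $\Xm_3$ are both silent there) and yield the Lemma~3 cascade forcing the leading entries of $\vv$ to zero; the last $n-N_t+1$ rows involve only $\Xm_3$ and yield, by the analogous cascade in $\hv_3$-partial sums, the vanishing of the $\uv$-components paired with $\widetilde{\Hcb}_3$. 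Working inward from both ends through the overlapping middle band, alternating between equations dominated by $\Xm_2$'s Toeplitz envelope and those dominated by $\Xm_3$'s, one eliminates the remaining entries of $\vv$ and of the two $\uv$-blocks in turn, reaching the contradiction $(\uv,\vv)=\zerov$. The first condition of Corollary~1 follows as a by-product of the same argument restricted to $\uv=\zerov$.

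The main obstacle will be the bookkeeping in the $l=1$ step: because three Toeplitz-like blocks overlap in a narrow central band whose width depends on $N_t$, the elimination order must be chosen so that every step exposes either a nonzero $\Delta s_{1,i}'$ or a subset sum of $\hv_2$- or $\hv_3$-entries that is almost surely nonzero, without creating circular dependencies between the yet-to-be-eliminated components of $\uv$ and $\vv$. The constraint $n\geq 2N_t-1$ imposed in \eqref{3_user_IC} is precisely what guarantees that the top-only and bottom-only windows are wide enough to launch the cascade from both ends before they collide in the middle.
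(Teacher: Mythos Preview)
Your plan is essentially the paper's own proof: verify Corollary~1 for each $l$ by reducing the second condition via Eq.~\eqref{Main_res} to almost-sure full column rank of $\begin{bmatrix}\Hcb_{\overline{l}}&\Delta\Xm_l\end{bmatrix}$, and then run a contradiction argument on $\Hcb_{\overline{l}}\uv+\Delta\Xm_l\vv=\zerov$ that exploits the structural zero rows, the full-diversity rotation property $\Delta s_{l,i}'\neq 0$, and the a.s.\ nonvanishing of partial sums of the i.i.d.\ channel entries. The paper simply treats $l=1,2,3$ in that order rather than your $l=3,2,1$.

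One correction is needed: your claim that the first condition of Corollary~1 ``follows as a by-product of the same argument restricted to $\uv=\zerov$'' is not right. Setting $\uv=\zerov$ in the contradiction argument yields only ML full-diversity of $\Xm_l$, whereas the first condition demands $\Mc(\Delta\Xm_l)\not\subseteq\Mc(\Hcb_{\overline{l}})$ for \emph{every} channel realization, not merely almost surely. The paper obtains this from the purely deterministic observation that $\Hcb_{\overline{l}}$ has structurally zero rows (rows $1,\ldots,N_t$ for $l=1$; rows $1,\ldots,n+1$ minus the first $N_t$ trivial ones for $l=2$) in which $\Delta\Xm_l$ is nonzero; any inclusion would force some $\Delta s_{l,i}'=0$ and hence $\Delta\Xm_l=\zerov$. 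A minor bookkeeping point: for $l=1$ one has $\vv\in\mathbb{C}^{N_t}$, so the first $N_t$ rows already annihilate \emph{all} of $\vv$, not just its leading entries; there are no ``remaining entries of $\vv$'' to eliminate in the middle band, only the two $\uv$-blocks.
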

\begin{proof}
It can be easily checked that $\Xm_i,i=1,2,3$ achieve full-diversity under ML decoding in a similar fashion to the two-user IC case.  It can be easily verified that $\Hcb_{\overline{1}}=\begin{bmatrix}\Hcb_{2,1}&\Hcb_{3,1}\end{bmatrix}$. Now, we proceed  to the proof that the scheme in \eqref{3_user_IC} satisfies Corollary 1. 
Suppose that $\exists\ \Delta\Xm_1\in\Delta\Cc_1\setminus\left\lbrace\zerov\right\rbrace$ which lies completely in the subspace spanned by $\Hcb_{\overline{1}}$. 
This cannot be true unless $\Delta s_{1,1}'=\Delta s_{1,2}'=\ldots=\Delta s_{1,N_t}'=0$, or equivalently $\Delta\Xm_1=\zerov$ which contradicts our assumption. 
Next, suppose that $\exists \begin{bmatrix}\uv&\vv\end{bmatrix}\neq\zerov, \Delta\Xm_1\in\Delta\Cc_1\setminus{\lbrace\zerov\rbrace} \mid \Hcb_{\overline{1}}\uv+\Delta\Xm_1\vv=\zerov$ with {\bf \emph{non-zero probability}}. According to the IC scheme in \eqref{3_user_IC}, the first $N_t$ equations are
\begin{IEEEeqnarray}{c;c;c}
v_1\Delta s_{1,1}'&=&0\\
\left(v_1+v_2\right)\Delta s_{1,2}'&=&0\\
\vdots&&\vdots\nonumber\\
\left(\sum^{N_t}_{i=1}v_i\right)\Delta s_{1,N_t}'&=&0.
\end{IEEEeqnarray}
Thanks to the full-diversity algebraic rotations, the above system of linear equations implies that $v_1=v_2=\ldots\ v_{N_t}=0$. The rest of the equations are
\begin{IEEEeqnarray}{c;c;c}
\hv_{2,1}\left(1\right)u_1&=&0\\
\vdots&&\vdots\nonumber\\
\left(\sum^{N_t}_{i=1}\hv_{2,1}\left(i\right)\right)u_{n-N_t+1}&=&0\\
\left(\sum^{N_t}_{i=1}\hv_{2,1}\left(i\right)\right)u_{n-N_t+2}+\hv_{3,1}\left(1\right)u_{n+1}&=&0\label{PIC_SIC_p2_eq_n+2}\\
\vdots&&\vdots\nonumber\\
\left(\sum^{N_t}_{i=1}\hv_{2,1}\left(i\right)\right)u_n+\left(\sum^{N_t-1}_{j=1}\hv_{3,1}\left(j\right)\right)u_{n+N_t-1}&=&0\\
\left(\sum^{N_t}_{i=2}\hv_{2,1}\left(i\right)\right)u_1+\left(\sum^{N_t}_{j=1}\hv_{3,1}\left(j\right)\right)u_{n+N_t}&=&0\label{PIC_SIC_p2_eq_n+N_t+1}\\
\vdots&&\vdots\nonumber\\
\hv_{2,1}\left(N_t\right)u_{N_t-1}+\left(\sum^{N_t}_{j=1}\hv_{3,1}\left(j\right)\right)u_{n+2N_t-2}&=&0\\
\left(\sum^{N_t}_{j=1}\hv_{3,1}\left(j\right)\right)u_{n+2N_t-1}&=&0\\
\vdots&&\vdots\nonumber\\
\left(\sum^{N_t}_{j=1}\hv_{3,1}\left(j\right)\right)u_{2n}&=&0\\
\left(\sum^{N_t}_{j=2}\hv_{3,1}\left(j\right)\right)u_{n+1}&=&0\\
\vdots&&\vdots\nonumber\\
\hv_{3,1}\left(N_t\right)u_{n+N_t-1}&=&0.
\end{IEEEeqnarray}
From the first $n-N_t+1$ equations, one has $u_1=u_2=\ldots =u_{n-N_t+1}=0$. Similarly, from the last $n-N_t+1$ equations, one has $u_{n+2N_t-1}=u_{n+2N_t}=\ldots=u_{2n}=0$ and $u_{n+1}=u_{n+2}=\ldots=u_{n+N_t-1}=0$. Consequently,  \eqref{PIC_SIC_p2_eq_n+2} implies that $u_{n-N_t+2}=0$. Proceeding in the same manner, we obtain $u_{n-N_t+2}=u_{n-N_t+3}=\ldots =u_{n}=0$. On the other hand, thanks to \eqref{PIC_SIC_p2_eq_n+N_t+1}, one gets $u_{n+N_t}=0$. Proceeding similarly we get $u_{n+N_t}=u_{n+N_t=1}=\ldots=u_{n+2N_t-2}=0$. Therefore one has $\begin{bmatrix}\uv&\vv\end{bmatrix}=\zerov$ which contradicts our initial assumption. 
Next, $\Xm_2$ is decoded, thus suppose that $\exists\ \Delta\Xm_2\in\Delta\Cc_2\setminus\left\lbrace\zerov\right\rbrace$ which lies completely in the subspace spanned by the columns of $\Hcb_{\overline{2}}$. This cannot be true unless $\Delta s_{2,1}'=\Delta s_{2,2}'=\ldots=\Delta s_{2,n-N_t+1}'=0$, hence $\Delta\Xm_2=\zerov$ which contradicts the assumption. Next, $\exists \begin{bmatrix}\uv&\vv\end{bmatrix}\neq\zerov, \Delta\Xm_2\in\Delta\Cc_2\setminus{\lbrace\zerov\rbrace} \mid \Hcb_{\overline{2}}\uv+\Delta\Xm_2\vv=\zerov$ with {\bf \emph{non-zero probability}}. It can be easily verified that in this case $\Hcb_{\overline{2}}=\Hcb_{3,1}$, thus we have
\begin{IEEEeqnarray}{c;c;c}
v_1\Delta s_{2,1}'&=&0\\
\vdots&&\vdots\nonumber\\
\left(\sum^{N_t}_{i=1}v_i\right)\Delta s_{2,n-N_t+1}'&=&0\\
\left(\sum^{N_t}_{i=1}v_i\right)\Delta s_{2,n-N_t+2}'+\hv_{3,1}\left(1\right)u_1&=&0\label{PIC_SIC_p3_eq_n-N_t+2}\\
\vdots&&\vdots\nonumber\\
\left(\sum^{N_t}_{i=1}v_i\right)\Delta s_{2,n}'+\left(\sum^{N_t-1}_{j=1}\hv_{3,1}\left(j\right)\right)u_{N_t-1}&=&0\label{PIC_SIC_p3_eq_n}\\
\left(\sum^{N_t}_{i=2}v_i\right)\Delta s_{2,1}'+\left(\sum^{N_t}_{j=1}\hv_{3,1}\left(j\right)\right)u_{N_t}&=&0\label{PIC_SIC_p3_eq_n+1}\\
\vdots&&\vdots\nonumber\\
v_{N_t}\Delta s_{2,N_t-1}'+\left(\sum^{N_t}_{j=1}\hv_{3,1}\left(j\right)\right)u_{2N_t-2}&=&0\label{PIC_SIC_p3_eq_2n-N_t}\\
\left(\sum^{N_t}_{j=1}\hv_{3,1}\left(j\right)\right)u_{2N_t-1}&=&0\\
\vdots&&\vdots\nonumber\\
\left(\sum^{N_t}_{j=1}\hv_{3,1}\left(j\right)\right)u_n&=&0\\
\left(\sum^{N_t}_{j=2}\hv_{3,1}\left(j\right)\right)u_1&=&0\\
\vdots&&\vdots\nonumber\\
\hv_{3,1}\left(N_t\right)u_{N_t-1}&=&0.
\end{IEEEeqnarray}
From the first $N_t$ equations one obtains $v_1=v_2=\ldots =v_{N_t}=0$ thanks to the full-diversity algebraic rotation. Consequently, according to the rest of equation we get $u_1=u_2=\ldots =u_n=0$, which contradicts our initial assumption. The proof for $\Xm_3$ is trivial since it achieves the full-diversity under ML, which finalizes the proof.
\end{proof}
\begin{myexm}
Consider the following rate-3/8 three-user IC scheme:
\begin{IEEEeqnarray}{c;c;c}
\Xm_1&=&\begin{bmatrix}s_{1,1}'&s_{1,2}'&s_{1,3}'&0&0&0&0&0\\
0&s_{1,2}'&s_{1,3}'&s_{1,1}'&0&0&0&0\end{bmatrix}^\mathsf{T},\nonumber\\
\Xm_2&=&\begin{bmatrix}0&0&s_{2,1}'&s_{2,2}'&s_{2,3}'&0&0&0\\
0&0&0&s_{2,2}'&s_{2,3}'&s_{2,1}'&0&0\end{bmatrix}^\mathsf{T},\\
\Xm_3&=&\begin{bmatrix}0&0&0&0&s_{3,1}'&s_{3,2}'&s_{3,3}'&0\\
0&0&0&0&0&s_{3,2}'&s_{3,3}'&s_{3,1}'\end{bmatrix}^\mathsf{T}\nonumber
\end{IEEEeqnarray}
\end{myexm}
\noindent where $\begin{bmatrix}s_{i,1}'&s_{i,2}'&s
_{i,3}'&\end{bmatrix}^\mathsf{T}=\Um_3\begin{bmatrix}s_{i,1}&s_{i,2}&s_{i,3}\end{bmatrix}^\mathsf{T}$,$\Um_3$ is the $3\times 3$ full diversity rotation \cite{FULL_DIV_ROT}  and $s_{i,j},\ \forall\ i=1,2,3,\ j=1,2,3$ are drawn from a conventional QAM constellation $\Ac$. 
\begin{mylem}
For the three-user IC scheme in \eqref{3_user_IC}, the employment of the real full-diversity algebraic rotations enables separate decoding of the real and imaginary parts of $\sv_1,\sv_2$ and $\sv_3$ under PICGD-SIC without any loss of performance.
\end{mylem}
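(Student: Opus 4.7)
The proof will mirror that of Lemma 3 (Appendix B), applied once for each of the three PICGD-SIC decoding stages. The objective at stage $l$ is to show that the decoding metric $\|\Pm_l\yv-\Pm_l\widetilde{\Hcb}_l\sv_l\|^{2}$ is additively separable into a term depending only on $\Re(\sv_l)$ and one depending only on $\Im(\sv_l)$. Because $\Um_n$ is a real rotation, $\Re(\sv_l')=\Um_n\Re(\sv_l)$ and $\Im(\sv_l')=\Um_n\Im(\sv_l)$, so an additive split at the level of the rotated symbols transfers without loss to the real and imaginary parts of the underlying QAM vector.

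\textbf{Reduction to realness of a sandwiched matrix.} Using $\Pm_l^{\mathsf H}\Pm_l=\Pm_l$, expand
\[
\|\Pm_l\yv-\Pm_l\widetilde{\Hcb}_l\sv_l\|^{2}=\|\Pm_l\yv\|^{2}-2\Re\bigl(\alphav^{\mathsf H}\sv_l\bigr)+\sv_l^{\mathsf H}\Mm_l\sv_l,
\]
with $\alphav=\widetilde{\Hcb}_l^{\mathsf H}\Pm_l\yv$ and $\Mm_l=\widetilde{\Hcb}_l^{\mathsf H}\Pm_l\widetilde{\Hcb}_l$. The linear term equals $-2\Re(\alphav)^{\mathsf T}\Re(\sv_l)-2\Im(\alphav)^{\mathsf T}\Im(\sv_l)$ and is automatically separable. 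A direct Hermitian-to-real computation gives
\[
\sv_l^{\mathsf H}\Mm_l\sv_l=\Re(\sv_l)^{\mathsf T}\Re(\Mm_l)\Re(\sv_l)+\Im(\sv_l)^{\mathsf T}\Re(\Mm_l)\Im(\sv_l)-2\Re(\sv_l)^{\mathsf T}\Im(\Mm_l)\Im(\sv_l),
\]
whose cross-term is the sole obstruction to additive separability. Consequently, the lemma reduces to showing that $\Im(\Mm_l)=\zerov$ for each $l\in\{1,2,3\}$.

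\textbf{Structural step.} A direct inspection of \eqref{D} reveals that the columns of $\Dm(\hv)$ have pairwise disjoint row-supports, hence $\widetilde{\Hcb}_l^{\mathsf H}\widetilde{\Hcb}_l=\sum_{m=1}^{N_r}\Dm(\hv_{l,m})^{\mathsf H}\Dm(\hv_{l,m})$ is a real diagonal matrix. For the last decoded user ($l=3$), SIC has already cancelled users $1$ and $2$, so $\Pm_3=\Id$ and realness of $\Mm_3$ is immediate. For $l=1,2$ one writes $\Pm_l=\Id-\widetilde{\Hcb}_{\overline l}\widetilde{\Hcb}_{\overline l}^{\dagger}$, which gives
\[
\Mm_l=\widetilde{\Hcb}_l^{\mathsf H}\widetilde{\Hcb}_l-\widetilde{\Hcb}_l^{\mathsf H}\widetilde{\Hcb}_{\overline l}\bigl(\widetilde{\Hcb}_{\overline l}^{\mathsf H}\widetilde{\Hcb}_{\overline l}\bigr)^{-1}\widetilde{\Hcb}_{\overline l}^{\mathsf H}\widetilde{\Hcb}_l;
\]
the first summand is real so only the second piece must be scrutinized. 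The staggered layout of $\Xm_1$, $\Xm_2$, $\Xm_3$ in \eqref{3_user_IC} restricts the overlap of any two users to a known contiguous range of time slots, and the cross-Grams $\widetilde{\Hcb}_i^{\mathsf H}\widetilde{\Hcb}_j$ inherit a sparse shift-block pattern from \eqref{D}. The plan is to exploit this pattern, together with the Hermitian symmetry of $\Mm_l$, to verify entry-by-entry that every purely imaginary contribution cancels in pairs.

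\textbf{Main obstacle.} Stage $l=2$ reduces to a two-user problem against $\widetilde{\Hcb}_3$ alone, so the cancellation argument of Appendix B carries over almost verbatim. The harder case is $l=1$: both $\widetilde{\Hcb}_2$ and $\widetilde{\Hcb}_3$ simultaneously enter $\Pm_1$, so one must track three distinct overlap regions (users $1/2$, $2/3$, and $1/3$), and the $2n\times 2n$ Gram $\widetilde{\Hcb}_{\overline 1}^{\mathsf H}\widetilde{\Hcb}_{\overline 1}$ that must be inverted is itself complex and mixes users $2$ and $3$ in every overlap cell. The main work in Appendix C is the verification that after this inversion the imaginary contributions to $\Mm_1$ still annihilate; this will be carried out by block-wise inspection exploiting the explicit shift structure of \eqref{C} and \eqref{D}.
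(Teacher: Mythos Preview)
Your reduction is exactly the one carried out in Appendix C: show that the Hermitian matrix $\Mm_l=\widetilde{\Hcb}_l^{\mathsf H}\Pm_l\widetilde{\Hcb}_l$ (equivalently $\Um_n^{\mathsf T}\Mm_l\Um_n$, since $\Um_n$ is real) is \emph{real} for each SIC stage, so that the QR factor $\Rm$ is real and the metric splits additively into real and imaginary parts. Your identification of $l=1$ as the only nontrivial stage is also correct; stages $l=2$ and $l=3$ are indeed the two-user and single-user situations already handled.

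Where the paper sharpens your ``block-wise inspection / entry-by-entry'' plan is in recognising that each cross-Gram is not merely sparse but has the exact form $\Hcb_{i,1}^{\mathsf H}\Hcb_{j,1}=\pmb{\Lambda}_{ij}\Pm_{\pmb{\sigma}_{ij}}$ with $\pmb{\Lambda}_{ij}$ diagonal and $\Pm_{\pmb{\sigma}_{ij}}$ the permutation matrix of an explicit cyclic shift $\pmb{\sigma}_{ij}$. After applying the $2\times 2$ block inverse to $\widetilde{\Hcb}_{\overline 1}^{\mathsf H}\widetilde{\Hcb}_{\overline 1}$, the expansion of $\Mm_1$ consists of real diagonal pieces plus a single complex contribution $\Cm+\Cm^{\mathsf H}$, where $\Cm$ involves the product $\Pm_{\pmb{\sigma}_{12}}(\cdot)\Pm_{\pmb{\sigma}_{23}}(\cdot)\Pm_{\pmb{\sigma}_{13}}^{\mathsf T}$. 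The decisive structural fact the paper invokes is the composition identity $\pmb{\sigma}_{12}\circ\pmb{\sigma}_{23}=\pmb{\sigma}_{13}$, so that $\Pm_{\pmb{\sigma}_{12}}\Pm_{\pmb{\sigma}_{23}}\Pm_{\pmb{\sigma}_{13}}^{\mathsf T}=\Id$; this forces $\Cm$ itself to be diagonal, hence $\Cm+\Cm^{\mathsf H}$ is real diagonal and $\Mm_1$ is real. Your componentwise verification would uncover the same cancellation, but the permutation-composition identity is the clean reason it happens and is worth isolating explicitly rather than checking cell by cell.
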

\emph{Proof}: see Appendix C.

A natural question arises here about the existence of similar schemes for arbitrary number of users. In fact, one could obtain similar schemes for an arbitrary number of users at the expense of a decreasing rate per user $R$. It can be proven that for $K$ users, $R$ is equal to $\frac{n}{\left(K-1\right)n+N_t}$ spcu, where $n$ denotes the number of transmitted symbols per codeword. For the case of two users (i.e., $K=2$), $R$ becomes $\frac{n}{n+N_t}$ which approaches $1$ asymptotically. For the case of three users (i.e., $K=3$), $R$ becomes $\frac{n}{2n+N_t}$, which approaches $\frac{1}{2}$ asymptotically. In the case of $K$ users, the asymptotic rate per user is equal to $\frac{1}{K-1}$, which approaches the rate of TDMA Alamouti \cite{ALAMOUTI} signalling (i.e., $\frac{1}{K}$) for large $K$. In other words, increasing the number of users decreases the rate per user to the extent that TDMA may become an attractive alternative. 

%%%%%%%%%%%%%%%%%%%%%%%%%%%%%%%%%%%%%%%%%%%%%%%%%%%%%%%%%%%%%%%%%%%%%%%%%%%%%%%%%%%%%%%%%%

\section{Simulations results}
In this section, we corroborate our theoretical claims via numerical simulations. In the first part, the proposed two-user IC scheme is shown to achieve the full-diversity gain offered by the MIMO MAC configuration. For this purpose, the CER performance of the proposed two-user IC schemes is compared to the reference diversity slopes in two MIMO MAC configurations, namely $\left(2^2,1\right)$ and $\left(3^2,1\right)$. The CER performance curves for our rate-$1/2$, $3/5$, and $2/3$ two-user IC schemes in the MIMO MAC $\left(2^2,1\right)$ configuration are depicted in Fig.~\ref{2x2x1_ref}, while the CER performance curves for our rate-$1/2$ and $4/7$ two-user IC schemes in the MIMO MAC $\left(3^2,1\right)$ configuration are depicted in Fig.~\ref{3x3x1_ref}. As can be easily verified, our proposed two-user IC scheme achieves the full-diversity gain as predicted by Lemma 3. 
% This file was created by matlab2tikz v0.2.3.
% Copyright (c) 2008--2012, Nico Schlömer <nico.schloemer@gmail.com>
% All rights reserved.
% 
% The latest updates can be retrieved from
%   http://www.mathworks.com/matlabcentral/fileexchange/22022-matlab2tikz
% where you can also make suggestions and rate matlab2tikz.
% 
% 
% 
\begin{figure}[h!]
\centering 
\begin{tikzpicture}

\begin{semilogyaxis}[%
view={0}{90},
width=4.0185in,
height=3.1694in,
scale only axis,
xmin=0, xmax=50,
xlabel={$\frac{E_b}{N_0}$ in dB},
xmajorgrids,
ymin=1e-06, ymax=1,
yminorticks=true,
ylabel={CER},
ymajorgrids,
yminorgrids,
legend style={at={(0.02,0.02)},anchor=south west,legend cell align=left}]
\addplot [
color=blue,
solid,
line width=2.0pt,
mark=o,
mark options={solid}
]
coordinates{
 (0,0.578846666666667)(5,0.324465555555556)(10,0.121684444444444)(15,0.03043)(20,0.00540777777777778)(25,0.000752222222222222)(30,8.55555555555556e-05)(35,8.88888888888889e-06) 
};
\addlegendentry{\footnotesize Rate-1/2 (i.e., n=2) IC scheme, QPSK};

\addplot [
color=blue,
solid,
line width=2.0pt,
mark=square,
mark options={solid}
]
coordinates{
 (0,0.81982)(5,0.612341666666666)(10,0.336541666666667)(15,0.126623333333333)(20,0.032115)(25,0.00598833333333334)(30,0.000881666666666667)(35,0.000116666666666667)(40,1.5e-05)(45,1.66666666666667e-06) 
};
\addlegendentry{\footnotesize Rate-3/5 (i.e., n=3) IC scheme, QPSK};

\addplot [
color=blue,
solid,
line width=2.0pt,
mark=triangle,
mark options={solid,rotate=270}
]
coordinates{
 (0,0.936797037037037)(5,0.825082962962963)(10,0.598134814814815)(15,0.327474074074074)(20,0.132905185185185)(25,0.0401037037037037)(30,0.00911111111111111)(35,0.00162518518518518)(40,0.000248888888888889)(45,3.55555555555556e-05)(50,4.44444444444444e-06) 
};
\addlegendentry{\footnotesize Rate-2/3 (i.e., n=4) Ic scheme, QPSK};

\addplot [
color=blue,
dashed,
line width=2.0pt
]
coordinates{
 (20,0.0169)(25,0.00169)(30,0.000169)(35,1.69e-05) 
};
\addlegendentry{\footnotesize Diversity gain $2$ reference};

\addplot [
color=blue,
dashed,
line width=2.0pt,
forget plot
]
coordinates{
 (30,0.003025)(35,0.0003025)(40,3.025e-05)(45,3.025e-06) 
};
\addplot [
color=blue,
dashed,
line width=2.0pt,
forget plot
]
coordinates{
 (35,0.0075625)(40,0.00075625)(45,7.5625e-05)(50,7.5625e-06) 
};
\end{semilogyaxis}
\end{tikzpicture}
\caption{Codeword error rate performance for the $\left(2^2,1\right)$ MIMO MAC channel.}
\label{2x2x1_ref}
\end{figure}
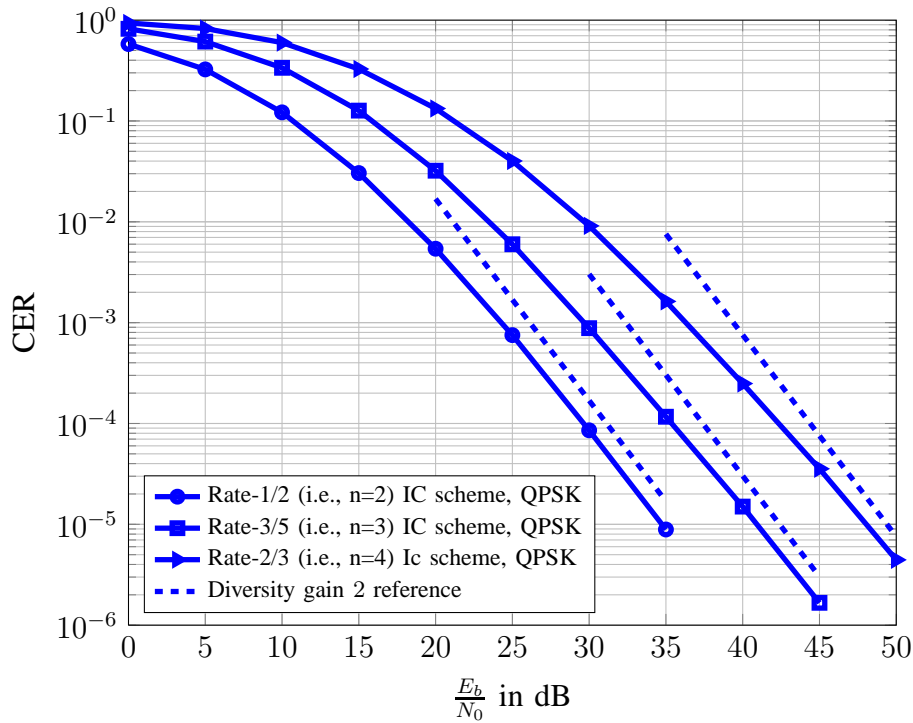
% This file was created by matlab2tikz v0.2.3.
% Copyright (c) 2008--2012, Nico Schlömer <nico.schloemer@gmail.com>
% All rights reserved.
% 
% The latest updates can be retrieved from
%   http://www.mathworks.com/matlabcentral/fileexchange/22022-matlab2tikz
% where you can also make suggestions and rate matlab2tikz.
% 
% 
% 
\begin{figure}[h!]
\centering 
\begin{tikzpicture}

\begin{semilogyaxis}[%
view={0}{90},
width=4.0185in,
height=3.1694in,
scale only axis,
xmin=0, xmax=45,
xlabel={$\frac{E_b}{N_0}$ in dB},
xmajorgrids,
ymin=1e-07, ymax=1,
yminorticks=true,
ylabel={CER},
ymajorgrids,
yminorgrids,
legend style={at={(0.02,0.02)},anchor=south west,legend cell align=left}]
\addplot [
color=blue,
solid,
line width=2.0pt,
mark=o,
mark options={solid}
]
coordinates{
 (0,0.742218333333333)(3.5,0.552091666666666)(7,0.330895833333333)(10.5,0.15149875)(14,0.05163125)(17.5,0.0132016666666667)(21,0.00262833333333333)(24.5,0.000426666666666667)(28,5.58333333333334e-05)(31.5,5.83333333333333e-06) 
};
\addlegendentry{\footnotesize Rate-1/2 (i.e., n=3) IC scheme, QPSK};

\addplot [
color=blue,
solid,
line width=2.0pt,
mark=square,
mark options={solid}
]
coordinates{
 (0,0.889528666666667)(3.5,0.763827555555554)(7,0.563248444444444)(10.5,0.335616666666667)(14,0.155802)(17.5,0.0564313333333333)(21,0.01649)(24.5,0.00407466666666666)(28,0.000886133333333323)(31.5,0.000174000000000007)(35,3.13333333333335e-05)(38.5,4.84444444444444e-06)(42,6.66666666666667e-07) 
};
\addlegendentry{\footnotesize Rate-4/7 (i.e., n=4) IC scheme, QPSK};

\addplot [
color=blue,
dashed,
line width=2.0pt
]
coordinates{
 (21,0.0164229032714985)(24.5,0.00146369279476027)(28,0.00013045175764697)(31.5,1.16265251384058e-05) 
};
\addlegendentry{\footnotesize Diversity gain $3$ reference};

\addplot [
color=blue,
dashed,
line width=2.0pt,
forget plot
]
coordinates{
 (31.5,0.00190157800792369)(35,0.000169478318349649)(38.5,1.51047710222454e-05)(42,1.34621413438716e-06) 
};
\end{semilogyaxis}
\end{tikzpicture}
\caption{Codeword error rate performance for the $\left(3^2,1\right)$ MIMO MAC channel.}
\label{3x3x1_ref}
\end{figure}
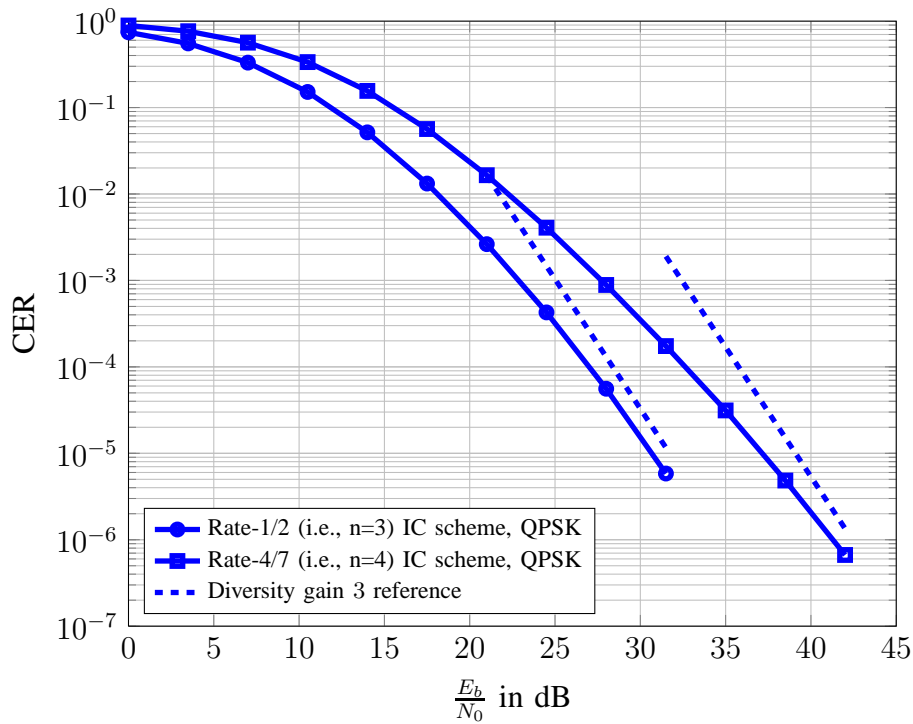

In the second part, the full diversity claim of the proposed three-user IC scheme is verified through numerical simulations. Towards this end we compare the CER performance of the proposed three-user IC schemes to the reference diversity slopes in two MIMO MAC configurations, namely $\left(2^3,1\right)$ and $\left(3^3,1\right)$. The CER performance curves for our rate-$3/8$ and $2/5$ three-user IC schemes in the MIMO MAC $\left(2^3,1\right)$ configuration are depicted in Fig.~\ref{2x2x2x1_ref}, while the CER performance curves for our rate-$5/13$ and $2/5$ three-user IC schemes in the MIMO MAC $\left(3^3,1\right)$ configuration are depicted in Fig.~\ref{3x3x3x1_ref}. Clearly, our proposed three-user IC scheme achieves the full-diversity gain as predicted by Lemma 5. 
% This file was created by matlab2tikz v0.2.3.
% Copyright (c) 2008--2012, Nico Schlömer <nico.schloemer@gmail.com>
% All rights reserved.
% 
% The latest updates can be retrieved from
%   http://www.mathworks.com/matlabcentral/fileexchange/22022-matlab2tikz
% where you can also make suggestions and rate matlab2tikz.
% 
% 
% 
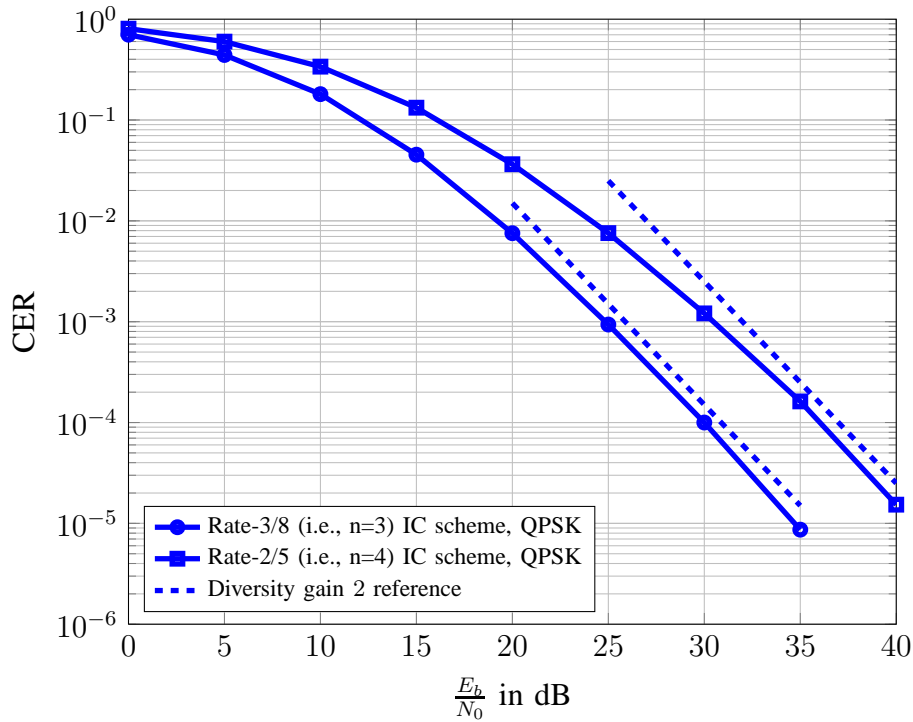
\begin{figure}[h!]
\centering 
\begin{tikzpicture}

\begin{semilogyaxis}[%
view={0}{90},
width=4.0185in,
height=3.1694in,
scale only axis,
xmin=0, xmax=40,
xlabel={$\frac{E_b}{N_0}$ in dB},
xmajorgrids,
ymin=1e-06, ymax=1,
yminorticks=true,
ylabel={CER},
ymajorgrids,
yminorgrids,
legend style={at={(0.02,0.02)},anchor=south west,legend cell align=left}]
\addplot [
color=blue,
solid,
line width=2.0pt,
mark=o,
mark options={solid}
]
coordinates{
 (0,0.701639333333333)(5,0.44146)(10,0.180454)(15,0.0453026666666667)(20,0.00753866666666666)(25,0.000937000000000001)(30,0.0001)(35,8.66666666666666e-06) 
};
\addlegendentry{\footnotesize Rate-3/8 (i.e., n=3) IC scheme, QPSK};

\addplot [
color=blue,
solid,
line width=2.0pt,
mark=square,
mark options={solid}
]
coordinates{
 (0,0.804613666666665)(5,0.598656666666668)(10,0.338151666666667)(15,0.132428333333334)(20,0.036438)(25,0.00756133333333332)(30,0.001209)(35,0.000162)(40,1.53333333333333e-05) 
};
\addlegendentry{\footnotesize Rate-2/5 (i.e., n=4) IC scheme, QPSK};

\addplot [
color=blue,
dashed,
line width=2.0pt
]
coordinates{
 (25,0.025)(30,0.0025)(35,0.00025)(40,2.5e-05) 
};
\addlegendentry{\footnotesize Diversity gain $2$ reference};

\addplot [
color=blue,
dashed,
line width=2.0pt,
forget plot
]
coordinates{
 (20,0.015)(25,0.0015)(30,0.00015)(35,1.5e-05) 
};
\end{semilogyaxis}
\end{tikzpicture}
\caption{Codeword error rate performance for the $\left(2^3,1\right)$ MIMO MAC channel.}
\label{2x2x2x1_ref}
\end{figure}
% This file was created by matlab2tikz v0.2.3.
% Copyright (c) 2008--2012, Nico Schlömer <nico.schloemer@gmail.com>
% All rights reserved.
% 
% The latest updates can be retrieved from
%   http://www.mathworks.com/matlabcentral/fileexchange/22022-matlab2tikz
% where you can also make suggestions and rate matlab2tikz.
% 
% 
% 
\begin{figure}[h!]
\centering 
\begin{tikzpicture}

\begin{semilogyaxis}[%
view={0}{90},
width=4.0185in,
height=3.1694in,
scale only axis,
xmin=0, xmax=40,
xlabel={$\frac{E_b}{N_0}$ in dB},
xmajorgrids,
ymin=1e-06, ymax=1,
yminorticks=true,
ylabel={CER},
ymajorgrids,
yminorgrids,
legend style={at={(0.02,0.02)},anchor=south west,legend cell align=left}]
\addplot [
color=blue,
solid,
line width=2.0pt,
mark=o,
mark options={solid}
]
coordinates{
 (0,0.863916066666666)(3.5,0.729326266666665)(7,0.525610200000001)(10.5,0.2987002)(14,0.128444333333333)(17.5,0.0414774666666667)(21,0.0100263333333333)(24.5,0.00185253333333333)(28,0.000270266666666658)(31.5,2.93333333333334e-05)(35,2.53333333333333e-06) 
};
\addlegendentry{\footnotesize Rate-5/13 (i.e., n=5) IC scheme, QPSK};

\addplot [
color=blue,
solid,
line width=2.0pt,
mark=square,
mark options={solid}
]
coordinates{
 (0,0.896963333333333)(3.5,0.790463333333333)(7,0.631066666666667)(10.5,0.427463333333333)(14,0.231206666666667)(17.5,0.0973066666666666)(21,0.0313)(24.5,0.00764)(28,0.00139333333333333)(31.5,0.000213133333333322)(35,2.69666666666666e-05)(38.5,3.53333333333334e-06) 
};
\addlegendentry{\footnotesize Rate-2/5 (i.e., n=6) IC scheme, QPSK};

\addplot [
color=blue,
dashed,
line width=2.0pt
]
coordinates{
 (28,0.00796214341106995)(31.5,0.000709626778467151)(35,6.32455532033676e-05)(38.5,5.6367658625289e-06) 
};
\addlegendentry{\footnotesize Diversity gain $3$ reference};

\addplot [
color=blue,
dashed,
line width=2.0pt,
forget plot
]
coordinates{
 (24.5,0.00670025388226444)(28,0.000597160755830247)(31.5,5.32220083850364e-05)(35,4.74341649025257e-06) 
};
\end{semilogyaxis}
\end{tikzpicture}
\caption{Codeword error rate performance for the $\left(3^3,1\right)$ MIMO MAC channel.}
\label{3x3x3x1_ref}
\end{figure}

Finally, we compare our two-user IC scheme in Eq.~\eqref{2_user_IC} to the interference cancellation scheme in \cite{2-USER_MIMO_MAC_PIC}. The performance is provided in terms of the CER over Rayleigh fading channel and the average decoding complexity for the $\left(2^2,4\right)$ MIMO MAC configuration. The ML detection is performed via a depth-first tree traversal with infinite initial radius sphere decoder. The radius is updated whenever a leaf node is reached and sibling nodes are visited according to the Schnorr-Euchner enumeration \cite{SE}. The proposed IC schemes and the one proposed in \cite{2-USER_MIMO_MAC_PIC} achieve the full-diversity gain under PICGD. However the latter scheme suffers from a rate loss w.r.t. our two-user IC scheme for the same signalling period. In fact, the rate per user for our two-user IC scheme is $\frac{n}{n+N_t}$, whereas for the rate per user for the IC scheme in \cite{2-USER_MIMO_MAC_PIC} is $\frac{n'}{n'+2N_t-1}$, where $n$ (resp. $n'$) denotes the number of codeword symbols for our IC scheme (resp. the IC scheme in \cite{2-USER_MIMO_MAC_PIC}). Hence the constellation used for the scheme in \cite{2-USER_MIMO_MAC_PIC} should be of higher order to achieve the same spectral efficiency, which favours our scheme especially for high spectral efficiencies. It is worth noting that the higher rate of our two-user IC scheme comes at the expense of a lower coding gain compared to the two-user IC scheme in \cite{2-USER_MIMO_MAC_PIC}, consequently, for low-rates, our IC scheme may suffer from a performance loss w.r.t the latter scheme. However, when increasing the spectral efficiency, the loss due to the higher constellation size of the scheme in \cite{2-USER_MIMO_MAC_PIC} overrides its coding gain advantage, which favours our scheme. In Fig.~\ref{2x2x4_CER}, our rate-$5/7$ two-user IC scheme is compared to the L. shi {\it et al.} rate-$4/7$ two-user IC scheme in terms of CER in two cases. In the first case, the underlying constellations for our two-user IC scheme and the two-user IC scheme in \cite{2-USER_MIMO_MAC_PIC}, are 16-QAM and 32-QAM, respectively, whereas in the second case, the underlying constellations for our two-user IC scheme and the two-user IC scheme in \cite{2-USER_MIMO_MAC_PIC}, are 256-QAM and 1024-QAM, respectively. One can easily verify that the relative performance gain of the proposed two-user IC scheme w.r.t the L. shi {\it et al.} IC scheme increases with the spectral efficiency. The corresponding average decoding complexity measured in terms of average number of visited nodes is depicted in Fig.~\ref{2x2x4_AV_COM}. In addition to its superiority in terms of performance, our two-user IC scheme provides a substantial average decoding complexity reduction w.r.t the two-user IC scheme in \cite{2-USER_MIMO_MAC_PIC} especially in the low to average SNR regime.
% This file was created by matlab2tikz v0.2.3.
% Copyright (c) 2008--2012, Nico Schlömer <nico.schloemer@gmail.com>
% All rights reserved.
% 
% The latest updates can be retrieved from
%   http://www.mathworks.com/matlabcentral/fileexchange/22022-matlab2tikz
% where you can also make suggestions and rate matlab2tikz.
% 
% 
% 
\begin{figure}[h!]
\centering 
\begin{tikzpicture}

\begin{semilogyaxis}[%
view={0}{90},
width=4.0185in,
height=3.1694in,
scale only axis,
xmin=0, xmax=30,
xlabel={$\frac{E_b}{N_0}$ in dB},
xmajorgrids,
ymin=1e-07, ymax=1,
yminorticks=true,
ylabel={CER},
ymajorgrids,
yminorgrids,
legend style={at={(0.02,0.02)},anchor=south west,legend cell align=left}]
\addplot [
color=blue,
solid,
line width=2.0pt,
mark=o,
mark options={solid}
]
coordinates{
 (0,0.610088888888889)(4,0.193025555555556)(8,0.0226644444444444)(12,0.00109355555555561)(16,2.83333333333334e-05)(20,2.22222222222222e-07) 
};
\addlegendentry{\footnotesize Rate-5/7 IC scheme, 16-QAM};

\addplot [
color=red,
solid,
line width=2.0pt,
mark=square,
mark options={solid}
]
coordinates{
 (0,0.843451111111109)(4,0.441191666666666)(8,0.0809772222222221)(12,0.00388233333333344)(16,5.06666666666668e-05)(20,3.33333333333333e-07) 
};
\addlegendentry{\footnotesize Rate-4/7 IC scheme \cite{2-USER_MIMO_MAC_PIC}, 32-QAM};

\addplot [
color=blue,
solid,
line width=2.0pt,
mark=diamond,
mark options={solid}
]
coordinates{
 (0,0.999037777777782)(4,0.978516666666664)(8,0.802952222222221)(12,0.360967777777778)(16,0.0611122222222221)(20,0.00403355555555581)(24,0.000127999999999999)(28,2.22222222222222e-06) 
};
\addlegendentry{\footnotesize Rate-5/7 IC scheme, 256-QAM};

\addplot [
color=red,
solid,
line width=2.0pt,
mark=triangle,
mark options={solid}
]
coordinates{
 (0,0.999961111111112)(4,0.998978888888892)(8,0.982557777777779)(12,0.834450000000001)(16,0.387724444444445)(20,0.0561599999999999)(24,0.00210888888888899)(28,2.19999999999999e-05) 
};
\addlegendentry{\footnotesize Rate-4/7 IC scheme \cite{2-USER_MIMO_MAC_PIC}, 1024-QAM};

\end{semilogyaxis}
\end{tikzpicture}
\caption{Codeword error rate performance for the $\left(2^2,4\right)$ MIMO MAC channel.}
\label{2x2x4_CER}
\end{figure}
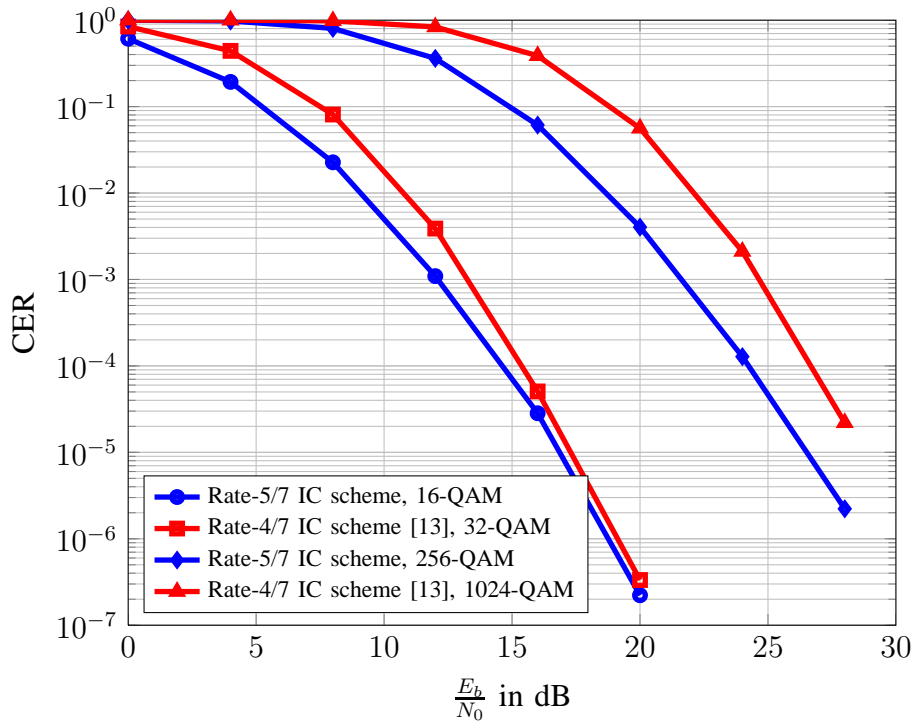
% This file was created by matlab2tikz v0.2.3.
% Copyright (c) 2008--2012, Nico Schlömer <nico.schloemer@gmail.com>
% All rights reserved.
% 
% The latest updates can be retrieved from
%   http://www.mathworks.com/matlabcentral/fileexchange/22022-matlab2tikz
% where you can also make suggestions and rate matlab2tikz.
% 
% 
% 
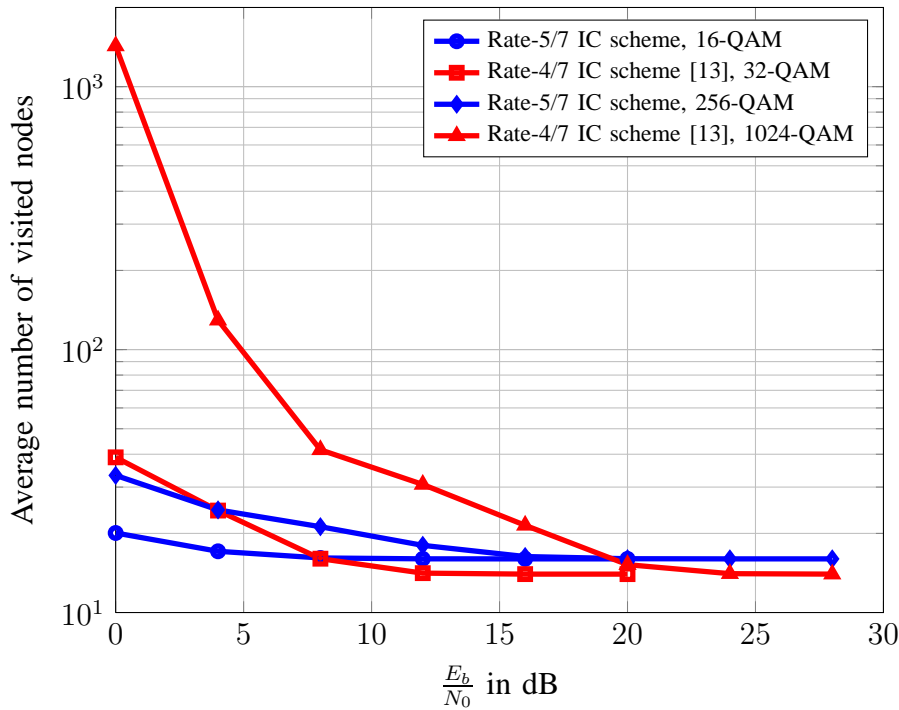
\begin{figure}[h!]
\centering 
\begin{tikzpicture}

\begin{semilogyaxis}[%
view={0}{90},
width=4.0185in,
height=3.1694in,
scale only axis,
xmin=0, xmax=30,
xlabel={$\frac{E_b}{N_0}$ in dB},
xmajorgrids,
ymin=10, ymax=2000,
yminorticks=true,
ylabel={Average number of visited nodes},
ymajorgrids,
yminorgrids,
legend style={at={(0.98,0.98)},anchor=north east,legend cell align=left}]
\addplot [
color=blue,
solid,
line width=2.0pt,
mark=o,
mark options={solid}
]
coordinates{
 (0,20.07204)(4,17.098545)(8,16.1289133333333)(12,16.0070665555555)(16,16.0002625555555)(20,16.0000075555556) 
};
\addlegendentry{\footnotesize Rate-5/7 IC scheme, 16-QAM};

\addplot [
color=red,
solid,
line width=2.0pt,
mark=square,
mark options={solid}
]
coordinates{
 (0,38.9294316666667)(4,24.4255166666667)(8,16.0257761111111)(12,14.1213365555556)(16,14.0021825555556)(20,14.0000123333333) 
};
\addlegendentry{\footnotesize Rate-4/7 IC scheme \cite{2-USER_MIMO_MAC_PIC}, 32-QAM};

\addplot [
color=blue,
solid,
line width=2.0pt,
mark=diamond,
mark options={solid}
]
coordinates{
 (0,33.2457644444444)(4,24.5693377777778)(8,21.2040066666667)(12,18.0140522222222)(16,16.33173)(20,16.0233453333334)(24,16.001024)(28,16.0000451111111) 
};
\addlegendentry{\footnotesize Rate-5/7 IC scheme, 256-QAM};

\addplot [
color=red,
solid,
line width=2.0pt,
mark=triangle,
mark options={solid}
]
coordinates{
 (0,1427.99527666667)(4,129.36933)(8,41.5846166666667)(12,30.6997755555556)(16,21.4649077777778)(20,15.2108584444445)(24,14.056486)(28,14.0007828888889) 
};
\addlegendentry{\footnotesize Rate-4/7 IC scheme \cite{2-USER_MIMO_MAC_PIC}, 1024-QAM};

\end{semilogyaxis}
\end{tikzpicture}
\caption{Average decoding complexity performance for the $\left(2^2,4\right)$ MIMO MAC channel.}
\label{2x2x4_AV_COM}
\end{figure}

%****************************************************************************************%

\subsection*{Worst-Case Decoding Complexity Analysis}
The worst-case decoding complexity order is defined by the minimum number of times the Euclidean distance metric has to be evaluated for the optimal decoder to estimate the transmitted codeword \cite{FAST}. In view of the aforementioned definition, our proposed IC scheme offers significant worst-case decoding complexity reduction as compared to \cite{2-USER_MIMO_MAC_PIC}. In order to ensure equal signalling periods, $n'$ is chosen such that $n=n'+N_t-1$. Let $q$ (resp. $q'$) denote the size of the underlying QAM constellation for our IC scheme (resp. the IC scheme in \cite{2-USER_MIMO_MAC_PIC}). Therefore, for square QAM constellations the worst-case decoding complexity for our scheme is $\Oc\left(q^{n/2}\right)$ thanks to the separate decoding of real and imaginary parts of each user's symbols, whereas for the L. shi {\it et al.} scheme \cite{2-USER_MIMO_MAC_PIC}, the worst-case decoding complexity is $\Oc\left(q'^{n'}\right)$. On the other hand, for equal spectral efficiencies, one has $q=q'^{n'/n}$, which yields a worst-case decoding complexity of $\Oc\left(q'^{n'/2}\right)$ for the proposed IC scheme. It is worth noting that in order to preserve the low-complexity of detection for our IC scheme, the underlying QAM constellations should be rectangular to avoid any dependence between the real and imaginary parts of the transmitted symbols. Restricting the underlying constellations to be rectangular is far from being deleterious, this is due to the fact that rectangular constellations provide generally a higher constellation figure of merit (defined as the ratio of the minimum squared distance between any two points of the constellation to its average power) than non-rectangular constellations (e.g., PSK and HEX), thus providing better performance. Moreover, QAM constellations are adopted in recent wireless communication standards such as LTE and LTE advanced, thus restricting ourselves to rectangular constellations is in compliance with current and upcoming norms.

%%%%%%%%%%%%%%%%%%%%%%%%%%%%%%%%%%%%%%%%%%%%%%%%%%%%%%%%%%%%%%%%%%%%%%%%%%%%%%%%%%%%%%%%%%

\section{Conclusion}
In this paper we focused on the multi-user MIMO MAC configuration and provided new sufficient conditions for a large family of STBCs to achieve the full-diversity gain offered by the channel under PICGD in the absence of CSIT. Explicit IC schemes that satisfy the full-diversity criteria were then proposed for two and three user MIMO MAC. We proved that the proposed IC schemes enable further reduction of the decoding complexity through separate decoding of real and imaginary parts of each user's transmitted symbols without any loss of performance. The proposed schemes were then compared to their counterpart in the literature and found to outperform their rival in the literature especially at high spectral efficiencies, while having a significantly less decoding complexity. 

%%%%%%%%%%%%%%%%%%%%%%%%%%%%%%%%%%%%%%%%%%%%%%%%%%%%%%%%%%%%%%%%%%%%%%%%%%%%%%%%%%%%%%%%%%

\section{Acknowldgment}
This work was partially funded by a grant from King Abdul-Aziz City of Science and Technology (KACST).

%%%%%%%%%%%%%%%%%%%%%%%%%%%%%%%%%%%%%%%%%%%%%%%%%%%%%%%%%%%%%%%%%%%%%%%%%%%%%%%%%%%%%%%%%%

\section*{Appendix A}
\begin{proof}
The general solution to Eq.~\eqref{cond2} may be expressed as
\begin{equation*}
\Cm^\mathsf{H}=\Qm\Mm;\ \Mm=\left(\Id-\Am\Am^{\dagger}\right) 
\end{equation*}
which implies that
\begin{equation}
\text{r}\left(\Cm^\mathsf{H}\Vm\Cm\right)= \text{r}\left(\Cm^\mathsf{H}\Vm^{\frac{1}{2}}\right)=\text{r}\left(\Qm\Mm\Vm^{\frac{1}{2}}\right)\overset{(b)}{\leq} \text{r}\left(\Mm\Vm^{\frac{1}{2}}\right)= \text{r}\left(\Mm\Vm\Mm\right),  
\label{upbnd1}
\end{equation}
where $(b)$ follows from $\text{r}\left(\Xm\Ym\right)\leq\min\left\lbrace\text{r}\left(\Xm\right),\text{r}\left(\Ym\right)\right\rbrace$.
Thanks to the Frobenius rank inequality \cite{MATRIX_ANALYSIS}, one has
\begin{equation*}
\text{r}\left(\Xm\Ym\Zm\right)+\text{r}\left(\Ym\right)\geq\text{r}\left(\Xm\Ym\right)+\text{r}\left(\Ym\Zm\right)
\end{equation*}
for arbitrary matrices $\Xm\in\mathbb{C}^{k\times l},\Ym\in\mathbb{C}^{l\times m},\Zm\in\mathbb{C}^{m\times n}$.
A straightforward application of the above inequality results in
\begin{equation}
\text{r}\left(\Qm\Mm\Vm^{\frac{1}{2}}\right)+\text{r}\left(\Mm\right)\geq\text{r}\left(\Qm\Mm\right)+\text{r}\left(\Mm\Vm^{\frac{1}{2}}\right).
\label{Frobenius_ineq}
\end{equation}
However from Eq.~\eqref{cond1} one has
\begin{equation*}
\text{r}\left(\Qm\Mm\right)=\text{r}\left(\Cm^\mathsf{H}\right)=\text{r}\left(\Cm\right)=p-\text{r}\left(\Am\right)\overset{(c)}{=}\text{r}\left(\Mm\right),
\end{equation*}
where $(c)$ follows from the definition of $\Mm$. Consequently, the inequality in \eqref{Frobenius_ineq} reduces to
\begin{equation}
\text{r}\left(\Qm\Mm\Vm^{\frac{1}{2}}\right)\geq\text{r}\left(\Mm\Vm^{\frac{1}{2}}\right)=\text{r}\left(\Mm\Vm\Mm\right).
\label{lrbnd1}
\end{equation}
Combining \eqref{upbnd1} and \eqref{lrbnd1} one obtains
\begin{equation*}
\text{r}\left(\Cm^\mathsf{H}\Vm\Cm\right)=\text{r}\left(\Mm\Vm\Mm\right).
\end{equation*}
At this stage, we need to prove that
\begin{equation*}
\text{r}\left(\begin{bmatrix}\Am & \Vm \end{bmatrix}\right)=\text{r}\left(\Mm\Vm\Mm\right)+\text{r}\left(\Am\right)   .
\end{equation*}
Towards this end, let us define $\Mm_c=\Am\Am^{\dagger}$, thus we have
\begin{eqnarray}
\text{r}\left(\begin{bmatrix}\Am &\Vm \end{bmatrix}\right)&\overset{(d)}{=}&\text{r}\left(\begin{bmatrix}\zerov & \Mm\Vm \end{bmatrix}+\begin{bmatrix}\Am & \Mm_c\Vm \end{bmatrix}\right)\\
&=&\text{r}\left(\begin{bmatrix}\zerov & \Mm\Vm \end{bmatrix}+\Am\begin{bmatrix}\Id_q & \Am^{\dagger}\Vm \end{bmatrix}\right),
\label{Rank_eq1}
\end{eqnarray}
where $(d)$ follows from the definition of $\Mm_c$. On the other hand, one has
\begin{equation*}
\text{r}\left(\Am \begin{bmatrix} \Id_q &\Am^{\dagger}\Vm \end{bmatrix}\right)\overset{(e)}{=}\text{r}\left(\begin{bmatrix} \Id_q& \Am^{\dagger}\Vm \end{bmatrix}\right)=q=\text{r}\left(\Am\right),   
\end{equation*} 
where $(e)$ follows from the assumption that $\Am$ is of full column rank and the rank equality $\text{r}\left(\Xm\Ym\right)=\text{r}\left(\Ym\right)$ for $\Xm\in\mathbb{C}^{k\times l},\Ym \in\mathbb{C}^{l\times m}$ if $\Xm$ is of rank $l$. Therefore
\begin{equation*}
\Mc\left(\Am \begin{bmatrix} \Id_q &\Am^{\dagger}\Vm \end{bmatrix}\right)=\Mc\left(\Am\right).
\end{equation*} 
Consequently, Eq~\eqref{Rank_eq1} may be expressed as
\begin{equation}
\begin{split}
\text{r}\left(\begin{bmatrix}\Am &\Vm \end{bmatrix}\right)
&=\text{r}\left(\begin{bmatrix}\zerov & \Mm\Vm \end{bmatrix}+\begin{bmatrix}\Am & \zerov\end{bmatrix}\right)\\
&=\text{r}\left(\begin{bmatrix}\Am & \Mm\Vm \end{bmatrix}\right)                  
\end{split}
\label{Rank_eq2}
\end{equation}
The above result above is easily verified as $\Mm$ can be interpreted as the projection matrix into a columns subspace orthogonal to $\Mc\left(\Am\right)$. Noting that $\Vm^{\frac{1}{2}}\succeq 0$, and thanks to the identity $\Mc\left(\Xm\right)=\Mc\left(\Xm\Xm^\mathsf{H}\right)$ for arbitrary matrix $\Xm \in \mathbb{C}^{m\times n}$ \cite{MATRIX_CALC}, one has $\Mc\left(\Vm^{\frac{1}{2}}\right)=\Mc\left(\Vm\right)$. Therefore, Eq.~\eqref{Rank_eq2} may be rewritten in the following form
\begin{equation*}
\text{r}\left(\begin{bmatrix}\Am &\Vm \end{bmatrix}\right)=\text{r}\left(\begin{bmatrix}\Am &\Vm^{\frac{1}{2}} \end{bmatrix}\right)=\text{r}\left(\begin{bmatrix}\Am &\Mm\Vm^{\frac{1}{2}} \end{bmatrix}\right). 
\end{equation*}
Moreover, one has
\begin{equation*}
\begin{split}
\text{r}\left(\begin{bmatrix}\Am &\Mm\Vm^{\frac{1}{2}} \end{bmatrix}\right)&=\text{r}\left(\begin{bmatrix}\Am^\mathsf{H} \\ \Vm^{\frac{1}{2}}\Mm \end{bmatrix}\begin{bmatrix}\Am &\Mm\Vm^{\frac{1}{2}} \end{bmatrix}\right)\\
&=\text{r}\left(\begin{bmatrix}\Am^\mathsf{H}\Am & \zerov\\ \zerov & \Vm^{\frac{1}{2}}\Mm^2\Vm^{\frac{1}{2}}\end{bmatrix}\right)\\
&=\text{r}\left(\Am^\mathsf{H}\Am\right)+\text{r}\left(\Vm^{\frac{1}{2}}\Mm^2\Vm^{\frac{1}{2}}\right)\\
&\overset{(f)}{=}\text{r}\left(\Am\right)+\text{r}\left(\Mm\Vm\Mm\right),
\end{split}
\end{equation*}
where $(f)$ follows from the identity $\text{r}\left(\Xm\Xm^\mathsf{H}\right)=\text{r}\left(\Xm^\mathsf{H}\Xm\right)$, thus completing the proof.
\end{proof}

%%%%%%%%%%%%%%%%%%%%%%%%%%%%%%%%%%%%%%%%%%%%%%%%%%%%%%%%%%%%%%%%%%%%%%%%%%%%%%%%%%%%%%%%%%

\section*{Appendix B}
\begin{proof}
Consider the case of one receive antenna at the destination, according to \eqref{2_user_IC} and \eqref{C}, one has
\begin{equation}
\Hcb_{1,1}\left(\hv_{1,1}\right)=\begin{bmatrix}\Dm\left(\hv_{1,1}\right)\\\underset{1\times n}{\zerov}\end{bmatrix},
\Hcb_{2,1}\left(\hv_{2,1}\right)=\begin{bmatrix}\underset{1\times n}{\zerov}\\\Dm\left(\hv_{2,1}\right)\end{bmatrix}
\label{2_user_Heq}
\end{equation}
where $\Dm\left(\hv\right)$ is as defined in \eqref{D}. Thanks to the column-wise orthogonality, it is straightforward to verify that
\begin{equation}
\Dm\left(\hv_{k,1}\right)^\mathsf{H}\Dm\left(\hv_{k,1}\right)=
\text{diag}\left(\begin{bmatrix}\Vert\dv_{k,1}\Vert^2&\ldots&\Vert\dv_{k,n}\Vert^2\end{bmatrix}\right)
\label{orth}
\end{equation}
where $\dv_{k,i}$ denotes the $i$-th column of the matrix $\Dm\left(\hv_{k,1}\right)$.
Suppose without loss of generality that the first user is being decoded, the optimal detection rule with PICGD turns into
\begin{equation*}
\sv^{\text{ML$\mid$PICGD}}_1=\text{arg}\underset{\sv_1\in\Ac_1}{\min}\Vert\Pm_1\yv-\Pm_1\Hcb_{1,1}\Um_n\sv_1\Vert.
\end{equation*}
For the proposed two-user IC scheme \eqref{2_user_IC}, it can be easily checked that $\Hcb_{\overline{1}}=\Hcb_{2,1}$, therefore
\begin{equation}
\begin{split}
\Um_n^\mathsf{T}\Hcb_{1,1}^\mathsf{H}\Pm_1\Hcb_{1,1}\Um_n&=
\Um_n^\mathsf{T}\Hcb_{1,1}^\mathsf{H}\left(\Id-\Hcb_{2,1}\left(\Hcb_{2,1}^\mathsf{H}\Hcb_{2,1}\right)^{-1}\Hcb_{2,1}^\mathsf{H}\right)\Hcb_{1,1}\Um_n\\
&=\Um_n^\mathsf{T}\left(\Hcb_{1,1}^\mathsf{H}\Hcb_{1,1}-\Hcb_{1,1}^\mathsf{H}\Hcb_{2,1}\left(\Hcb_{2,1}^\mathsf{H}\Hcb_{2,1}\right)^{-1}\Hcb_{2,1}^\mathsf{H}\Hcb_{1,1}\right)\Um_n.
\end{split}
\label{2_user_Cholesky}
\end{equation}
From \eqref{2_user_Heq} and \eqref{orth} we have that $\Hcb_{1,1}^\mathsf{H}\Hcb_{1,1}$ and $\left(\Hcb_{2,1}^\mathsf{H}\Hcb_{2,1}\right)^{-1}$ are both diagonal matrices with real entries. On the other hand, it can be easily verified from \eqref{D} that $\Hcb_{1,1}^\mathsf{H}\Hcb_{2,1}=\Lambda_{12}\Pm_{\pmb{\sigma}_{12}}$, where $\pmb{\sigma}_{12}=\begin{bmatrix}2&3&\ldots&n&1\end{bmatrix}$ and $\pmb{\Lambda}_{12}$ is defined as 
\begin{equation*}
\pmb{\Lambda}_{12}=\text{diag}\left(
\begin{bmatrix}\left<\dv_1\left(\hv_{1,1}\right),
\dv_n\left(\hv_{2,1}\right)\right>&\left<\dv_2\left(\hv_{1,1}\right),
\dv_1\left(\hv_{2,1}\right)\right>&\ldots&\left<\dv_n\left(\hv_{1,1}\right),
\dv_{n-1}\left(\hv_{2,1}\right)\right>\end{bmatrix}\right).
\end{equation*}
Recall that for a diagonal matrix $\pmb{\Lambda}$, $\Pm_{\pmb{\sigma}}\pmb{\Lambda}\Pm_{\pmb{\sigma}}^\mathsf{T}$ is also diagonal \cite{MATRIX_ANALYSIS}, consequently the matrix $\Hcb_{1,1}^\mathsf{H}\Hcb_{2,1}\left(\Hcb_{2,1}^\mathsf{H}\Hcb_{2,1}\right)^{-1}\Hcb_{2,1}^\mathsf{H}\Hcb_{1,1}$ is diagonal with real entries. Therefore choosing $\Um_n$ to be real implies that $\Um_n^\mathsf{T}\Hcb_{1,1}^\mathsf{H}\Pm_1\Hcb_{1,1}\Um_n$ is real. But $\Um_n^\mathsf{T}\Hcb_{1,1}^\mathsf{H}\Pm_1\Hcb_{1,1}\Um_n\succeq 0$, thus may be factored according to the Cholesky decomposition as $\Lm\Lm^\mathsf{H}$ with $\Lm$ being a real lower triangular matrix \cite{MATRIX_ANALYSIS}. Applying the QR-decomposition, one obtains $\Pm_1\Hcb_{1,1}\Um_n=\Qm\Rm$, thus $\Um_n^\mathsf{T}\Hcb_{1,1}^\mathsf{H}\Pm_1\Hcb_{1,1}\Um_n=\Lm\Lm^\mathsf{H}=\Rm^\mathsf{H}\Rm\Rightarrow \Rm\in\mathbb{R}^{T\times n}$. The ML decision rule under PICGD reduces to
\begin{IEEEeqnarray*}{c'c'c}
\Re\left(\sv_1\right)^{\text{ML}\mid\text{PICGD}}&=&\text{arg}\underset{\hat{\xv}\in
\Re\left\lbrace\Ac_1\right\rbrace}{\min}\Big\Vert\Re\left\lbrace\Qm_1^\mathsf{H}\Pm_1\yv\right\rbrace-\Rm_1\hat{\xv}\Big\Vert^2\\
\Im\left(\sv_1\right)^{\text{ML}\mid\text{PICGD}}&=&\text{arg}\underset{\hat{\xv}\in
\Im\left\lbrace\Ac_1\right\rbrace}{\min}\Big\Vert\Im\left\lbrace\Qm_1^\mathsf{H}\Pm_1\yv\right\rbrace-\Rm_1\hat{\xv}\Big\Vert^2,
\end{IEEEeqnarray*}
where $\Qm=\begin{bmatrix}\Qm_1 & \Qm_2\end{bmatrix},\ \Rm=\begin{bmatrix}\Rm^\mathsf{T}_1&\zerov \end{bmatrix}^\mathsf{T}$. The same approach can be adopted to prove the separability of the real and imaginary parts of $\sv_2$ and is therefore omitted. The case of arbitrary number of receive antennas follows similarly, thus ending the proof.
\end{proof}

%%%%%%%%%%%%%%%%%%%%%%%%%%%%%%%%%%%%%%%%%%%%%%%%%%%%%%%%%%%%%%%%%%%%%%%%%%%%%%%%%%%%%%%%%%

\section*{Appendix C}
\begin{proof}
Consider the case of one receive antenna at the destination, according to \eqref{3_user_IC} and \eqref{C}, one has the following
\begin{equation}
\Hcb_{1,1}=\begin{bmatrix}\Dm\left(\hv_{1,1}\right)\\\underset{\left(n+1\right)\times n}{\zerov}\end{bmatrix},\
\Hcb_{2,1}=\begin{bmatrix}\underset{N_t\times n}{\zerov}\\\Dm\left(\hv_{2,1}\right)\\\underset{n-N_t+1}{\zerov}\end{bmatrix},\
\Hcb_{3,1}=\begin{bmatrix}\underset{\left(n+1\right)\times n}{\zerov}\\\Dm\left(\hv_{3,1}\right)\end{bmatrix}.
\label{3_user_Heq}
\end{equation}
We proceed by decoding the first user
\begin{equation*}
\sv^{\text{ML$\mid$PICGD}}_1=\text{arg}\underset{\sv_1\in\Ac_1}{\min}\Vert\Pm_1\yv-\Pm_1\Hcb_{1,1}\Um_n\sv_1\Vert.
\end{equation*}
For the proposed two-user IC scheme \eqref{3_user_IC}, it can be easily checked that $\Hcb_{\overline{1}}=\begin{bmatrix}\Hcb_{2,1}&\Hcb_{3,1}\end{bmatrix}$, therefore
\begin{equation}
\Um_n^\mathsf{T}\Hcb_{1,1}^\mathsf{H}\Pm_1\Hcb_{1,1}\Um_n=
\Um_n^\mathsf{T}\Hcb_{1,1}^\mathsf{H}\left(\Id-\begin{bmatrix}\Hcb_{2,1}&\Hcb_{3,1}\end{bmatrix}
\begin{bmatrix}\Hcb_{2,1}^\mathsf{H}\Hcb_{2,1}&\Hcb_{2,1}^\mathsf{H}\Hcb_{3,1}\\\Hcb_{3,1}^\mathsf{H}\Hcb_{2,1}&\Hcb_{3,1}^\mathsf{H}\Hcb_{3,1}\end{bmatrix}^{-1}
\begin{bmatrix}\Hcb_{2,1}^\mathsf{H}\\\Hcb_{3,1}^\mathsf{H}\end{bmatrix}\right)\Hcb_{1,1}\Um_n.
\label{3_user_Cholesky}
\end{equation}
Thanks to \eqref{3_user_Heq} and \eqref{D} we have
\begin{IEEEeqnarray*}{c;c;l}
\Hcb_{i,1}^\mathsf{H}\Hcb_{i,1}&=&\pmb{\Lambda}_{ii},\ \forall\ i=1,2,3,\\
\Hcb_{i,1}^\mathsf{H}\Hcb_{j,1}&=&\pmb{\Lambda}_{ij}\Pm_{\pmb{\sigma}_{ij}},\forall\ 1\leq i<j\leq 3,
\end{IEEEeqnarray*}
where 
\begin{IEEEeqnarray*}{c;c;l}
\pmb{\sigma}_{12}&=&\begin{bmatrix}N_t+1&\ldots&n&1&\ldots&N_t\end{bmatrix},\\
\pmb{\sigma}_{13}&=&\begin{bmatrix}2&\ldots&\ldots&n&1\end{bmatrix},\\
\pmb{\sigma}_{23}&=&\begin{bmatrix}n-N_t+2&\ldots&n&1&\ldots&n-N_t+1\end{bmatrix},\\
\pmb{\Lambda}_{ii}&=&\text{diag}
\left(\begin{bmatrix}\Vert\dv_{i,1}\Vert^2&\ldots&\Vert\dv_{i,n}\Vert^2\end{bmatrix}\right),\forall\ i=1,2,3,\\
\pmb{\Lambda}_{12}&=&\text{diag}\left(
\begin{bmatrix}\left<\dv_{1,1},\dv_{2,n-N_t+1}\right>&\ldots&\left<\dv_{1,N_t-1},\dv_{2,n-1}\right>&0&\left<\dv_{1,N_t+1},\dv_{2,1}\right>&\ldots&
\left<\dv_{1,n},\dv_{2,n-N_t}\right>\end{bmatrix}\right)\\
\pmb{\Lambda}_{13}&=&\text{diag}\left(
\begin{bmatrix}0&\left<\dv_{1,2},\dv_{3,1}\right>&\ldots&\left<\dv_{1,N_t-1},\dv_{3,N_t-2}\right>&0&\ldots&0\end{bmatrix}\right)\\
\pmb{\Lambda}_{23}&=&\text{diag}\left(
\begin{bmatrix}\left<\dv_{2,1},\dv_{3,N_t}\right>&\ldots&\left<\dv_{2,N_t-1},\dv_{3,2N_t-2}\right>&0&\ldots&0&\left<\dv_{2,n-N_t+2},\dv_{3,1}\right>&\ldots&
\left<\dv_{2,n},\dv_{3,N_t-1}\right>\end{bmatrix}\right)
\end{IEEEeqnarray*}
Consequently, \eqref{3_user_Cholesky} may be re-written as
\begin{equation*}
\begin{split}
&\Um_n^\mathsf{T}\left(\pmb{\Lambda}_{11}-\begin{bmatrix}\pmb{\Lambda}_{12}\Pm_{\pmb{\sigma}_{12}}&\pmb{\Lambda}_{13}\Pm_{\pmb{\sigma}_{13}}\end{bmatrix}
\begin{bmatrix}\pmb{\Lambda}_{22}&\pmb{\Lambda}_{23}\Pm_{\pmb{\sigma}_{23}}\\
\Pm_{\pmb{\sigma}_{23}}^\mathsf{T}\pmb{\Lambda}_{23}^\mathsf{H}&\pmb{\Lambda}_{33}\end{bmatrix}^{-1}
\begin{bmatrix}\Pm_{\pmb{\sigma}_{12}}^\mathsf{T}\pmb{\Lambda}_{12}^\mathsf{H}\\\Pm_{\pmb{\sigma}_{13}}^\mathsf{T}\pmb{\Lambda}_{13}^\mathsf{H}\end{bmatrix}\right)\Um_n\\
&\overset{(g)}{=}\Um_n^\mathsf{T}\left(\pmb{\Lambda}_{11}-\begin{bmatrix}\pmb{\Lambda}_{12}\Pm_{\pmb{\sigma}_{12}}&\pmb{\Lambda}_{13}\Pm_{\pmb{\sigma}_{13}}\end{bmatrix}
\begin{bmatrix}\pmb{\Sigma}_{11}&-\pmb{\Lambda}_{22}^{-1}\pmb{\Lambda}_{23}\Pm_{\pmb{\sigma}_{23}}\pmb{\Sigma}_{22}\\
-\pmb{\Lambda}_{33}^{-1}\Pm_{\pmb{\sigma}_{23}}^\mathsf{T}\pmb{\Lambda}_{23}^\mathsf{H}\pmb{\Sigma}_{11}&\pmb{\Sigma}_{22}\end{bmatrix}
\begin{bmatrix}\Pm_{\pmb{\sigma}_{12}}^\mathsf{T}\pmb{\Lambda}_{12}^\mathsf{H}\\\Pm_{\pmb{\sigma}_{13}}^\mathsf{T}\pmb{\Lambda}_{13}^\mathsf{H}\end{bmatrix}\right)\Um_n\\
&=\Um_n^\mathsf{T}(\pmb{\Lambda}_{11}-
\underbrace{\pmb{\Lambda}_{12}\Pm_{\pmb{\sigma}_{12}}\pmb{\Sigma}_{11}\Pm_{\pmb{\sigma}_{12}}^\mathsf{T}\pmb{\Lambda}_{12}^\mathsf{H}}_{\Am}
+\underbrace{\pmb{\Lambda}_{13}\Pm_{\pmb{\sigma}_{13}}\pmb{\Lambda}_{33}^{-1}\Pm_{\pmb{\sigma}_{23}}^\mathsf{T}\pmb{\Lambda}_{23}^\mathsf{H}\pmb{\Sigma}_{11}\Pm_{\pmb{\sigma}_{12}}^\mathsf{T}\pmb{\Lambda}_{12}^\mathsf{H}}_{\Cm}
-\underbrace{\pmb{\Lambda}_{13}\Pm_{\pmb{\sigma}_{13}}\pmb{\Sigma}_{22}\Pm_{\pmb{\sigma}_{13}}^\mathsf{T}\pmb{\Lambda}_{13}^\mathsf{H}}_{\Bm}\\
&+\underbrace{\pmb{\Lambda}_{12}\Pm_{\pmb{\sigma}_{12}}\pmb{\Lambda}_{22}^{-1}\pmb{\Lambda}_{23}\Pm_{\pmb{\sigma}_{23}}\pmb{\Sigma}_{22}\Pm_{\pmb{\sigma}_{13}}^\mathsf{T}\pmb{\Lambda}_{13}^\mathsf{H}}_{\Cm^\mathsf{H}}
)\Um_n
\end{split}
\end{equation*}
where $(g)$ follows from the blockwise matrix inverse \cite{MATRIX_ANALYSIS}, $\pmb{\Sigma}_{11}=\left(\pmb{\Lambda}_{22}-\pmb{\Lambda}_{23}\Pm_{\pmb{\sigma}_{23}}\pmb{\Lambda}_{33}^{-1}\Pm_{\pmb{\sigma}_{23}}^\mathsf{T}\pmb{\Lambda}_{23}^\mathsf{H}\right)^{-1}$,  and $\pmb{\Sigma}_{22}=\left(\pmb{\Lambda}_{33}-\Pm_{\pmb{\sigma}_{23}}^\mathsf{T}\pmb{\Lambda}_{23}^\mathsf{H}\pmb{\Lambda}_{22}^{-1}\pmb{\Lambda}_{23}\Pm_{\pmb{\sigma}_{23}}\right)^{-1}$               . Recall that for a diagonal matrix $\pmb{\Lambda}$, $\Pm_{\pmb{\sigma}}\pmb{\Lambda}\Pm_{\pmb{\sigma}}^\mathsf{T}$ is also diagonal \cite{MATRIX_ANALYSIS}, hence $\pmb{\Sigma}_{11}$ and $\pmb{\Sigma}_{22}$ are diagonal matrices with real entries which in turns implies that $\Am$ and $\Bm$ are diagonal matrices with real entries. On the other hand, it is well known that the product of two permutation matrices $\Pm_{\pmb{\sigma}}\Pm_{\pmb{\epsilon}}$ is given by $\Pm_{\pmb{\pi}}$ such that $\pmb{\pi}=\pmb{\sigma}\circ\pmb{\epsilon}$. Accordingly, it can be verified that $\Pm_{\pmb{\sigma}_{12}}\Pm_{\pmb{\sigma}_{23}}=\Pm_{\pmb{\sigma}_{13}}$, thus yielding $\Pm_{\pmb{\sigma}_{12}}\Pm_{\pmb{\sigma}_{23}}\Pm_{\pmb{\sigma}_{13}}^\mathsf{T}=\Id$. Consequently, $\Cm$ is diagonal with complex entries, and therefore for real rotation matrices $\Um_n$ one has that $\Um_n^\mathsf{T}\Hcb_{1,1}^\mathsf{H}\Pm_1\Hcb_{1,1}\Um_n$ is real. But $\Um_n^\mathsf{T}\Hcb_{1,1}^\mathsf{H}\Pm_1\Hcb_{1,1}\Um_n\succeq 0$, thus may be factored according to the Cholesky decomposition as $\Lm\Lm^\mathsf{H}$ with $\Lm$ being a real lower triangular matrix \cite{MATRIX_ANALYSIS}. Applying the QR-decomposition, one obtains $\Pm_1\Hcb_{1,1}\Um_n=\Qm\Rm$, thus $\Um_n^\mathsf{T}\Hcb_{1,1}^\mathsf{H}\Pm_1\Hcb_{1,1}\Um_n=\Lm\Lm^\mathsf{H}=\Rm^\mathsf{H}\Rm\Rightarrow \Rm\in\mathbb{R}^{T\times n}$. The ML decision rule under PICGD reduces to
\begin{IEEEeqnarray*}{c;c;c}
\Re\left(\sv_1\right)^{\text{ML}\mid\text{PICGD}}&=&\text{arg}\underset{\hat{\xv}\in
\Re\left\lbrace\Ac_1\right\rbrace}{\min}\Big\Vert\Re\left\lbrace\Qm_1^\mathsf{H}\Pm_1\yv\right\rbrace-\Rm_1\hat{\xv}\Big\Vert^2\\
\Im\left(\sv_1\right)^{\text{ML}\mid\text{PICGD}}&=&\text{arg}\underset{\hat{\xv}\in
\Im\left\lbrace\Ac_1\right\rbrace}{\min}\Big\Vert\Im\left\lbrace\Qm_1^\mathsf{H}\Pm_1\yv\right\rbrace-\Rm_1\hat{\xv}\Big\Vert^2,
\end{IEEEeqnarray*}
where $\Qm=\begin{bmatrix}\Qm_1 & \Qm_2\end{bmatrix},\ \Rm=\begin{bmatrix}\Rm^\mathsf{T}_1&\zerov \end{bmatrix}^\mathsf{T}$. It is straightforward to prove that the real and imaginary parts of $\sv_2$ and $\sv_3$ can be decoded separately without any loss of performance. The case of arbitrary number of receive antennas follows similarly, thus ending the proof.
\end{proof}

%%%%%%%%%%%%%%%%%%%%%%%%%%%%%%%%%%%%%%%%%%%%%%%%%%%%%%%%%%%%%%%%%%%%%%%%%%%%%%%%%%%%%%%%%%

\ifwindows
\bibliography{C:/Users/amr.ismail/Desktop/Submissions/BibTeX/STBCs,C:/Users/amr.ismail/Desktop/Submissions/BibTeX/MU-STBCs,C:/Users/amr.ismail/Desktop/Submissions/BibTeX/PIC,E:/BibTeX/Books}
\else
\bibliography{/media/WORK/BibTeX/STBCs,/media/WORK/BibTeX/MU-STBCs,/media/WORK/BibTeX/    PIC,/media/WORK/BibTeX/Books}
\fi
\bibliographystyle{ieeetr}
\end{document}